\definecolor{Gray}{gray}{0.85}
\definecolor{LightCyan}{rgb}{0.88,1,1}
\definecolor{LightBlue}{rgb}{0.75,0.936,1.00}
\newcolumntype{a}{>{\columncolor{Gray}}c}
\newcolumntype{b}{>{\columncolor{white}}c}
\newtheorem{thm}{Theorem}[]
\newtheorem{cor}{Corollary}
\newtheorem{lem}{Lemma}
\theoremstyle{remark}
\newtheorem{rem}{Remark}
\theoremstyle{definition}
\begin{document}

\title{ Optimal Beamforming for Gaussian MIMO Wiretap Channels with Two Transmit Antennas}
\author{\IEEEauthorblockN{Mojtaba Vaezi,  Wonjae Shin, and H. Vincent Poor}\\
}

\maketitle


\begin{abstract}
A Gaussian multiple-input multiple-output
\textit{wiretap} channel in which the eavesdropper and legitimate receiver are
equipped with arbitrary numbers of antennas and the transmitter has two antennas is
studied in this paper. Under an average  power constraint, the optimal input covariance to obtain
the secrecy capacity of this channel is unknown, in general. In this paper,
the input covariance matrix required  to achieve the capacity is determined.
It is shown that the secrecy capacity of this channel can be achieved by \textit{linear precoding}.
The optimal precoding and power allocation schemes that maximize the achievable secrecy rate,
and thus achieve the capacity,  are developed subsequently.
 The secrecy capacity is then compared with the achievable secrecy rate of
  \textit{generalized singular value decomposition} (GSVD)-based
  precoding, which is the best previously proposed technique for this problem.
 Numerical results demonstrate that substantial
gain can be obtained in secrecy rate between  the proposed  and GSVD-based precodings.
\end{abstract}

\begin{IEEEkeywords}
Physical layer security, MIMO wiretap channel, secrecy rate,  beamforming, linear precoding.
\end{IEEEkeywords}
\section{Introduction}
{\let\thefootnote\relax\footnotetext{

Manuscript received January 16, 2017; revised May 5, 2017, and accepted July 16, 2017.
This research was supported in part by the U. S. National Science
Foundation under Grant CMMI-1435778, and  in part by a Canadian NSERC fellowship.
This paper was partly presented at IEEE International Symposium on Information Theory  (ISIT), in Aachen, 2017 \cite{vaezi2017isit}.

Mojtaba Vaezi and H. Vincent Poor are with the Department of Electrical Engineering, Princeton University, Princeton, NJ, USA (e-mail:
\{mvaezi,poor\}@princeton.edu). Wonjae Shin is with Department of Electrical and Computer Engineering, Seoul National University, Seoul, Korea (e-mail:wonjae.shin@snu.ac.kr).

Digital Object Identifier

}}

Wireless networks have become an indispensable part of our daily life and security/privacy of information transfer
via these networks is crucial.
Unfortunately, wireless communication systems are inherently
insecure due to the broadcast nature of the medium. Hence, wireless security has been
an important concern for many years.
Traditionally, security is  provided at the upper layers 
of wireless networks via \textit{cryptographic} techniques, wherein the legitimate user
has a secret key to decode its message.
Security can be
also offered at the lowest layer (physical layer), e.g.,
via  beamforming or artificial noise  injection \cite{mukherjee2014principles}, to support and supplement
existing cryptographic protocols.

Physical layer security has attracted widespread  attention as a means of augmenting wireless security \cite{mukherjee2014principles}.
Physical layer security is based on the information theoretic secrecy that can be provided by
 physical communication channels, an idea that was first proposed by Wyner \cite{wyner1975wire},
 in the context of the {\it wiretap} channel. In this channel,  a transmitter wishes to
transmit information to a \textit{legitimate} receiver while keeping
the information secure from an \textit{eavesdropper}.  Wyner demonstrated that it is possible to
have both \textit{reliable} and \textit{secure} communication between the transmitter and legitimate receiver
in the presence of an eavesdropper  under certain circumstances.
The basic principal is that  the channel of the legitimate receiver   should be stronger in some sense than that of the
eavesdropper.

With the rapid advancement of  multi-antenna techniques, security enhancement
in multiple-input multiple-output (MIMO) wiretap channels, see Fig.~\ref{fig:sysmodel},
has drawn significant attention.  A big step toward understanding the MIMO Gaussian
wiretap channel was taken in \cite{khisti2010secure,oggier2011secrecy,liu2009note}
where a closed-form expression for the capacity  of this channel
was established.  However, to compute this  expression,  the  input covariance
matrix that maximizes it needs to be determined. Under an average power constraint, such a
matrix is unknown in general.\footnote{Under a power-covariance constraint, the capacity
expression and corresponding covariance matrix is found in \cite{liu2009note} and \cite{bustin2009mmse}, respectively.}
Recently, numerical solutions have been proposed to compute a transmit covariance matrix for this
channel \cite{li2013transmit,steinwandt2014secrecy,loyka2015algorithm}. These numerical approaches  solve
the underlying non-convex optimization problem iteratively. Despite their efficiency,
there is still motivation to find an analytical
solution for this problem and study  simpler techniques for secure communication, e.g., based on linear precoding.

\begin{figure}[t]
\centering
\includegraphics[scale=.42]{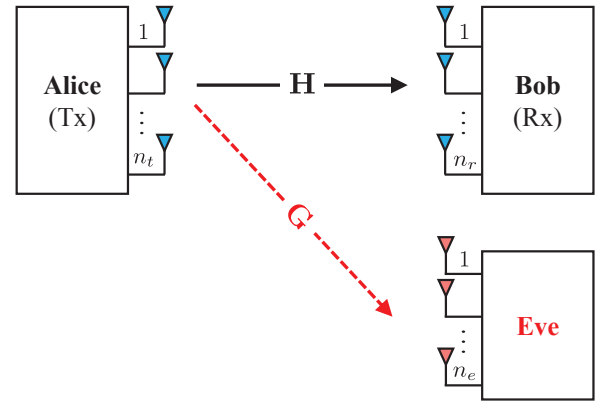}
\caption{MIMO Gaussian wiretap channel with $n_t$, $n_r$, and $n_e$  antennas,  at the transmitter (Alice), legitimate receiver (Bob), and eavesdropper (Eve).
}
\label{fig:sysmodel}
\end{figure}

Precoding  is a technique for exploiting transmit diversity via weighting the information stream.
\textit{Singular value decomposition} (SVD) precoding with \textit{water-filling} power allocation is a well-known example that achieves the  capacity of the MIMO channel.
 Khisti and  Wornell \cite{khisti2010secure} proposed a \textit{generalized SVD} (GSVD)-based precoding scheme with equal power allocation  for the MIMO Gaussian wiretap channel.
The optimal power allocation scheme for  GSVD precoding in the MIMO Gaussian wiretap channel was obtained in \cite{fakoorian2012optimal}.
Although GSVD precoding gets  close to the capacity in certain antenna configurations,
it is neither capacity-achieving nor very close to capacity, in general.
Despite its importance and years of research, optimal transmit/receive
strategies to  maximize the secure rate in MIMO wiretap channels remain unknown, in general.
Linear beamforming transmission has, however, been proved to be optimal for the  special case  of  $n_t=2$, $n_r=2$, and $n_e = 1$
in \cite{shafiee2009towards}.
 It is also known to be the optimal communication
strategy for multiple-input single-output (MISO) wiretap channels  \cite{shafiee2007achievable, khisti2010secureMISOME}.

Recently, a closed-form solution for
the optimal covariance matrix has been  found  when the channel is strictly degraded
and another condition on the channel matrices, which is equivalent to a lower threshold on the
transmitted power, holds \cite{loyka2012optimal, fakoorian2013full, loyka2016optimal}.
The combination of this result and the unit-rank solution of \cite{khisti2010secureMISOME}
can give the optimal covariance matrix for the case of two transmit antennas \cite{loyka2016optimal}.
The optimal solution is, however, still open in general.

In this paper,  we characterize  optimal precoding and power allocation   for
MIMO Gaussian wiretap channels in which the legitimate receiver and
eavesdropper have arbitrary numbers of antennas but the transmitter has two antennas.
This proves that linear beamforming transmission can be optimal  for a much broader class of MIMO Gaussian wiretap
channels.  
 Our approach in finding the optimal covariance matrix is completely different from that of \cite{fakoorian2013full} and \cite{loyka2016optimal}.
 It does not require the degradedness condition and thus provides the optimal solution for both
 full-rank and rank-deficient cases in one shot.
The  proposed beamforming and power allocation schemes result in a  computable capacity with a reasonably low complexity.
It requires  searching over  two scalars (power allocation) at most.
In addition, the proposed beamforming and power allocation schemes
can bring notably high gain over GSVD-based beamforming, as
confirmed by simulation results.

It is worth highlighting  that the new precoding and power allocation techniques are  applicable
to and optimal for MIMO  channels without secrecy, simply by setting the eavesdroppers channel to zero.
In such cases, power allocation is even simpler and does not require a search.

Secure transmission strategies  in multi-antenna networks with various constraints and/or in different settings, e.g.,
with energy-efficiency \cite{zhang2014energy}, finite memory \cite{shlezinger2017secrecy},
joint source-relay precoding \cite{wang2014joint},
game-theoretic precoding\cite{fang2015game}, and varying eavesdropper channel states \cite{he2014mimo} have been
considered recently.

The rest of the paper is organized as follows. In Section~\ref{sec:model},
we describe the system model. In Section~\ref{sec:main},
we  reformulate the secrecy rate problem and  propose  linear precoding
and   power allocation schemes to achieve the capacity of the MIMO/MISO wiretap channels.
 In Section~\ref{sec:ext}, we show that the proposed precoding and power allocation schemes are also optimal for
 MIMO/MISO channels without an eavesdropper and we discuss possible extensions of the proposed precoding method. We
  present numerical results in Section~\ref{sec:sim} before concluding the paper in Section~\ref{sec:con}.

Throughout this work, we use notations $\mathrm{tr}(\cdot)$, $\det(\cdot)$, $(\cdot)^t$, and
$(\cdot)^H$ to denote the trace,  determinant, transpose, and conjugate transpose of a matrix,
respectively. Matrices are written in bold capital letters and
vectors are written in bold small letters. $\mathbf{A}\succeq \mathbf{0} $ means that $\mathbf{A}$ is a positive semidefinite matrix, and
$\mathbf{I}_m$ represents the identity matrix of size $m$.

\section{System Model and Preliminaries}
\label{sec:model}
Consider a MIMO Gaussian wiretap channel, in which
a transmitter (Alice) wishes to  communicate with a legitimate receiver (Bob) in the presence of an eavesdropper (Eve),
as shown in Fig.~\ref{fig:sysmodel}. The nodes are equipped with $n_t$, $n_r$, and $n_e$  antennas, respectively.
Let $\mathbf{H} \in \mathbb{R}^{n_r\times n_t}$ and
$\mathbf{G} \in \mathbb{R}^{n_e\times n_t}$ be the channel matrices for the legitimate user and  eavesdropper. 
Both channels are assumed
to undergo independent and identically distributed (i.i.d.) Rayleigh fading,
where the channel gains are  real Gaussian random variables.\footnote{
The results of this paper is easily extendable to the  case
where  the channel gains and noises are complex Gaussian random variables and the input is real.
This is due to the fact that,  each use of the complex channel can
be thought of as two independent uses of a real additive white Gaussian noise channel,
noting that the noise is independent in the I and Q components  \cite{tse2005fundamentals}.
}
The received signal at the legitimate receiver and eavesdropper are, respectively, given by
\begin{subequations}\label{eq:sys}
\begin{align}
\mathbf{y}_r = \mathbf{H} \, \mathbf{x} \, + \, \mathbf{w}_r,  \\
\mathbf{y}_e = \mathbf{G} \, \mathbf{x} \, + \,  \mathbf{w}_e,
\end{align}
\end{subequations}
in which $\mathbf{x} \in \mathbb{R}^{n_t\times 1}$ is the transmitted signal and  $\mathbf{w}_i \in \mathbb{R}^{n_i\times 1}$, $i\in\{r,e\}$,
represents an i.i.d. Gaussian noise vector with zero mean and identity covariance matrix.
As will be seen later,  $\mathbf{x}=\mathbf{V}  \mathbf{s}$ where  $\mathbf{V}\in \mathbb{R}^{n_t \times n_t}$
 is the \textit{precoding matrix} to transmit a secrete data symbol vector  $\mathbf{s}$.
The transmitted signal is subject to an average power constraint
\begin{align*}
\mathrm{tr} ( \mathbb{E} \{ \mathbf {xx}^t \} ) = \mathrm{tr}(\mathbf {Q}) \le P,
\end{align*}
where $P$ is a scalar, and $\mathbf {Q} =  \mathbb{E} \{ \mathbf {xx}^t \}$ is the input covariance matrix.

A single-letter expression for the secrecy capacity of
the  general \textit{discrete memoryless} wiretap channel with transition probability $p(y_r,y_e|x)$
is given by \cite{csiszar1978broadcast}
\begin{align}\label{eq:capDMC}
C_{s} = \max _{p(u,x)}  \big[I(U;Y_{r})-I(U;Y_{e})\big],
\end{align}
in which the auxiliary random variable $U$ satisfies the Markov relation
 $U \rightarrow X \rightarrow (Y_{r},Y_{e}).$

With this, the problem of characterizing the secrecy capacity of the multiple-antenna wiretap channel
reduces to evaluating \eqref{eq:capDMC} for the channel model given in \eqref{eq:sys}. This
  was, however, open until the  work of Khisti and Wornell \cite{khisti2010secure} and Oggier and Hassibi \cite{oggier2011secrecy},
  where they proved that $U=X$ is optimal in \eqref{eq:capDMC}. Then, the secrecy capacity (bits per real dimension)
 is the solution of the following optimization problem~\footnote{For a complex channel, the factor $\frac{1}{2}$ is dropped as the capacity
 per complex dimension  is twice as the capacity per real dimension}\cite{khisti2010secure,oggier2011secrecy,liu2009note}:

\begin{equation}
\begin{aligned}\label{eq:cap0}
&\max_{\mathbf{Q}}
& &\frac{1}{2} \left[  \log \det(\mathbf{I}_{n_r}\!\! + \mathbf{H} \mathbf{Q} \mathbf{H}^t) -  \log \det(\mathbf{I}_{n_e} \!\! + \mathbf{G} \mathbf{Q} \mathbf{G}^t) \right] \\
 &\operatorname{s.t.} &&\mathbf{Q}  \succeq \mathbf{0}, \mathbf{Q}  = \mathbf{Q}^t, \; \mathrm{tr} (\mathbf{Q}) \le P ,
\end{aligned}
\end{equation}
in which the first two constraints are due to the fact that $\mathbf{Q}$ is a covariance matrix
  and the third constraint is the aforementioned average power constraint. 
 The secrecy capacity is obviously  nonnegative  as $\mathbf{Q}= \mathbf{0}$ is a feasible solution of \eqref{eq:cap0}.
The above optimization problem is non-convex (except for   $n_r = n_e = 1$ \cite{li2007secret})
 and its objective function possesses numerous local maxima \cite{bashar2012secrecy,li2013transmit,loyka2015algorithm}.
 As such, a closed-form solution for the optimum  $\mathbf{Q}$ is  not known, in general.

The problem of characterizing the optimal input covariance matrix that achieves secrecy capacity subject
to a power constraint has been under active investigation recently \cite{loyka2012optimal,fakoorian2013full,loyka2016optimal,li2009transmitter}.
Until recently, the special cases for which the optimal  $\mathbf{Q}$ was
known were limited to the cases of
$n_r =1$\cite{khisti2010secureMISOME} and
$n_t=2$, $n_r=2$, $n_e = 1$  \cite{shafiee2009towards}.\footnote{ In  these cases, the capacity is obtained  by  beamforming
(i.e., signaling with rank one covariance) along  the direction
of the  generalized eigenvector of $\mathbf{H}$ and $\mathbf{G}$ corresponding to the maximum eigenvalue of that pair.}
More recently, major steps have been made in characterizing the optimal covariance matrix. 
Fakoorian and Swindlehurst  \cite{fakoorian2013full} determined conditions under which the optimal
input covariance matrix is full-rank or rank-deficient. They also
fully characterized the optimal $\mathbf{Q}$ when it is full-rank.
Very recently, Loyka and   Charalambous \cite{loyka2016optimal} found a closed-form solution for
the optimal covariance matrix when the channel is strictly degraded ($\mathbf{H}^H\mathbf{H}\succ \mathbf{G}^H\mathbf{G}$)
 and  transmission power is greater than a certain value.
The combination of this result and the unit-rank solution of \cite{khisti2010secure} gives the optimal
 $\mathbf{Q}$ for the rank-2 case \cite{loyka2016optimal}. The optimal solution is, however,
still open in general.

In this paper, we study the MIMO wiretap channel with $n_t=2$  while $n_r$ and $n_e$ are arbitrary integers.
We derive a closed-form solution for the optimal covariance matrix in this case.
Our approach is completely different from  that of \cite{fakoorian2013full} and  \cite{loyka2016optimal}.
In addition, unlike \cite{fakoorian2013full} and  \cite{loyka2016optimal},
 our approach does not require finding the rank of the optimal covariance matrix
before fully characterizing the solution. It gives the optimal solution for both
full-rank and rank-deficient cases in one shot. What is more,
in \cite{loyka2016optimal}, it is
not clear when the rank of the optimal solution switches from one to two (i.e., the paper does not clarify at what power threshold this
change of rank happens); thus, it is not known whether a rank-one solution or full-rank solution should be applied.

\section{A Capacity Achieving Precoding}
\label{sec:main}

Based on the optimization problem in  \eqref{eq:cap0}, a characterization of the secrecy capacity of the MIMO Gaussian wiretap channel is given by
 non-negative $R$ such that
\begin{align} \label{eq:cap}
R &\le \max_{\mathbf{Q}}  \frac{1}{2} \left[  \log \det(\mathbf{I}_{n_r}\!\! + \mathbf{H} \mathbf{Q} \mathbf{H}^t) -  \log \det(\mathbf{I}_{n_e} \!\! + \mathbf{G} \mathbf{Q} \mathbf{G}^t) \right] \notag \\
&= \max_{ \mathbf{Q}} \frac{1}{2}  \log \frac{\det(\mathbf{I}_{n_t} \!\! + \mathbf{H}^t\mathbf{H} \mathbf{Q} ) }{ \det(\mathbf{I}_{n_t} \!\!+ \mathbf{G}^t \mathbf{G} \mathbf{Q} ) },
\end{align}
\noindent where $\mathbf{Q}  \succeq \mathbf{0}, \mathbf{Q}  = \mathbf{Q}^t, \mathrm{tr} (\mathbf{Q}) \le P$.
The equality in \eqref{eq:cap}  is due to the fact  that for any
$\mathbf{A} \in \mathbb{C}^{m\times n}$ and $\mathbf{B} \in \mathbb{C}^{n\times m}$ we have
\begin{align} \label{eq:AB}
\det(\mathbf{I}_{m} + \mathbf{A} \mathbf{B}) = \det(\mathbf{I}_{n} + \mathbf{B} \mathbf{A}).
\end{align}
Note that $\mathbf{H}^t\mathbf{H}$ and $\mathbf{G}^t \mathbf{G}$ are $n_t \times n_t $ symmetric matrices.
Also, $\mathbf{Q}$ is an $n_t \times n_t $ symmetric matrix and its \textit{eigendecomposition}
can be written as
\begin{align}  \label{eq:Q}
 \mathbf {Q} =\mathbf{V} \mathbf {\Lambda } \mathbf{V}^t,
\end{align}
where $\mathbf{V}\in \mathbb{R}^{n_t \times n_t}$ is the \textit{orthogonal matrix} whose $i$th column is
the $i$th \textit{eigenvector} of $\mathbf {Q}$ and $\mathbf {\Lambda}$ is the diagonal matrix whose
diagonal elements are the corresponding eigenvalues, i.e., $\mathbf {\Lambda}_{ii}=\lambda _{i}$.
In this paper, we study the case where $n_t=2$  while $n_r$ and $n_e$ are arbitrary integers.
\subsection{Reformulating the Problem for $n_t=2$}
\label{sec:refor}
We simplify the optimization problem   \eqref{eq:cap} for $n_t=2$  in this subsection.
Since $\mathbf{V}$ is  orthogonal  its columns are orthonormal and, without loss of generality,  we can write
\begin{align} \label{eq:V}
\mathbf{V} = \left[ \begin{matrix}
-\sin \theta & \cos \theta \\ \cos \theta & \sin \theta
\end{matrix} \right],
\end{align}
for some $\theta$.
Further, let
\begin{align}\label{eq:HG}
\mathbf{H}^t\mathbf{H} = \left[ \begin{matrix}
h_1 & h_2 \\ h_2 & h_3
\end{matrix} \right], \quad
\mathbf{G}^t\mathbf{G} = \left[ \begin{matrix}
g_1 & g_2 \\ g_2 & g_3
\end{matrix} \right].
\end{align}
The following lemma converts the optimization problem   \eqref{eq:cap} into a more tractable problem.
\begin{lem}\label{lem:equi-opt}
For $n_t = 2$ but arbitrary $n_r$ and $n_e$, the optimization problem in \eqref{eq:cap} is equivalent to
\begin{align} \label{eq:cap2}
R \le \max_{  \lambda_1 + \lambda_2 \le P } \frac{1}{2}   \log \left(\frac{ a_1 \sin2\theta + b_1 \cos2\theta +c_1 }{a_2 \sin2\theta + b_2 \cos2\theta +c_2}\right),
\end{align}
in which $\lambda_1$ and $\lambda_2$ are nonnegative, and
\begin{subequations}\label{eq:num}
\begin{align}
a_1 &=  (\lambda_2 - \lambda_1)h_2 ,\\
b_1&= \frac{1}{2}(\lambda_1 - \lambda_2)(h_3 - h_1),\\
c_1 & = 1 + \frac{1}{2}(\lambda_1 + \lambda_2)(h_1 + h_3)   +\lambda_1 \lambda_2 (h_1 h_3- h_2^2),
\end{align}
\end{subequations}
and
\begin{subequations}\label{eq:den}
\begin{align}
a_2 &=  (\lambda_2 - \lambda_1)g_2 ,\\
b_2&= \frac{1}{2}(\lambda_1 - \lambda_2)(g_3 - g_1),\\
c_2 & = 1 + \frac{1}{2}(\lambda_1 + \lambda_2)(g_1 + g_3)   +\lambda_1 \lambda_2 (g_1 g_3- g_2^2).
\end{align}
\end{subequations}
\end{lem}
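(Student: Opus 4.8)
The plan is to substitute the eigendecomposition \eqref{eq:Q} into the ratio form of \eqref{eq:cap} and carry out the resulting $2\times 2$ determinant computations explicitly. First I would note that, since $n_t=2$, the matrix $\mathbf{V}$ in \eqref{eq:Q} is a $2\times 2$ orthogonal matrix; because permuting the two eigenvectors and/or flipping their signs leaves $\mathbf{Q}=\mathbf{V}\mathbf{\Lambda}\mathbf{V}^t$ unchanged, we lose no generality by taking $\mathbf{V}$ of the form \eqref{eq:V} for some $\theta$. Together with $\mathbf{\Lambda}=\mathrm{diag}(\lambda_1,\lambda_2)$, the constraints $\mathbf{Q}\succeq\mathbf{0}$ and $\mathrm{tr}(\mathbf{Q})\le P$ translate exactly into $\lambda_1,\lambda_2\ge 0$ and $\lambda_1+\lambda_2\le P$, so (up to the harmless redundancy just mentioned) the feasible set of \eqref{eq:cap} is in one-to-one correspondence with the triples $(\theta,\lambda_1,\lambda_2)$ over which \eqref{eq:cap2} is optimized.

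Next I would evaluate the numerator $\det(\mathbf{I}_2+\mathbf{H}^t\mathbf{H}\,\mathbf{Q})$ using the elementary $2\times 2$ identity $\det(\mathbf{I}_2+\mathbf{A})=1+\mathrm{tr}(\mathbf{A})+\det(\mathbf{A})$ with $\mathbf{A}=\mathbf{H}^t\mathbf{H}\,\mathbf{Q}$. The determinant term factors as $\det(\mathbf{H}^t\mathbf{H})\det(\mathbf{Q})=(h_1h_3-h_2^2)\lambda_1\lambda_2$, which accounts for the last summand of $c_1$. For the trace term I would write $\mathrm{tr}(\mathbf{H}^t\mathbf{H}\,\mathbf{Q})=\mathrm{tr}\big(\mathbf{V}^t\mathbf{H}^t\mathbf{H}\mathbf{V}\,\mathbf{\Lambda}\big)=\lambda_1\,[\mathbf{V}^t\mathbf{H}^t\mathbf{H}\mathbf{V}]_{11}+\lambda_2\,[\mathbf{V}^t\mathbf{H}^t\mathbf{H}\mathbf{V}]_{22}$, then substitute \eqref{eq:V} and \eqref{eq:HG} and simplify the two quadratic forms with the double-angle identities $2\sin\theta\cos\theta=\sin 2\theta$ and $\cos^2\theta-\sin^2\theta=\cos 2\theta$. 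Collecting the coefficient of $\sin 2\theta$ yields $a_1=(\lambda_2-\lambda_1)h_2$, the coefficient of $\cos 2\theta$ yields $b_1=\frac{1}{2}(\lambda_1-\lambda_2)(h_3-h_1)$, and the constant part (the leading $1$ plus the $\frac{1}{2}(\lambda_1+\lambda_2)(h_1+h_3)$ contribution from the trace plus the determinant term) yields $c_1$ as in \eqref{eq:num}; hence $\det(\mathbf{I}_2+\mathbf{H}^t\mathbf{H}\,\mathbf{Q})=a_1\sin2\theta+b_1\cos2\theta+c_1$.

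Repeating the identical computation for $\det(\mathbf{I}_2+\mathbf{G}^t\mathbf{G}\,\mathbf{Q})$, with $\mathbf{G}^t\mathbf{G}$ in place of $\mathbf{H}^t\mathbf{H}$, produces $a_2\sin2\theta+b_2\cos2\theta+c_2$ with the coefficients of \eqref{eq:den}. Dividing the two expressions, and invoking the ratio form of \eqref{eq:cap}, turns the objective into $\frac{1}{2}\log\!\big((a_1\sin2\theta+b_1\cos2\theta+c_1)/(a_2\sin2\theta+b_2\cos2\theta+c_2)\big)$; since maximizing over $\mathbf{Q}$ is the same as maximizing over $(\theta,\lambda_1,\lambda_2)$ in the stated region, the two problems are equivalent, which is \eqref{eq:cap2}.

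I do not anticipate a conceptual obstacle: the argument is a direct substitution plus a $2\times 2$ determinant expansion. The two points that require care are (i) verifying that the restricted parameterization \eqref{eq:V} of $\mathbf{V}$ is genuinely without loss of generality, i.e.\ that as $\theta$ varies it still sweeps out every symmetric positive semidefinite $\mathbf{Q}$ with $\mathrm{tr}(\mathbf{Q})\le P$, and (ii) the sign bookkeeping in the diagonal entries $[\mathbf{V}^t\mathbf{H}^t\mathbf{H}\mathbf{V}]_{11}$ and $[\mathbf{V}^t\mathbf{H}^t\mathbf{H}\mathbf{V}]_{22}$, where the asymmetric $-\sin\theta$ entry of \eqref{eq:V} makes an off-by-a-sign slip easy to commit when passing to the double-angle form.
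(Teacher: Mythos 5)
Your proposal is correct and follows essentially the same route as the paper's proof: substitute the eigendecomposition $\mathbf{Q}=\mathbf{V}\mathbf{\Lambda}\mathbf{V}^t$, reduce to the quadratic forms $[\mathbf{V}^t\mathbf{H}^t\mathbf{H}\mathbf{V}]_{ii}$, and collect the $\sin 2\theta$ and $\cos 2\theta$ terms via the double-angle identities, with the trace and positive-semidefiniteness constraints passing to $\lambda_1+\lambda_2\le P$ and $\lambda_i\ge 0$. The only (harmless) organizational difference is your use of $\det(\mathbf{I}_2+\mathbf{A})=1+\mathrm{tr}(\mathbf{A})+\det(\mathbf{A})$, which yields the $\lambda_1\lambda_2(h_1h_3-h_2^2)$ term directly from $\det(\mathbf{H}^t\mathbf{H})\det(\mathbf{Q})$ and bypasses the off-diagonal entry $w_2$ that the paper computes explicitly before expanding $(1+\lambda_1 w_1)(1+\lambda_2 w_3)-\lambda_1\lambda_2 w_2^2$.
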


\begin{proof}
To prove this lemma, we  simplify the determinants in  \eqref{eq:cap}.
First,  consider $\det(\mathbf{I}_{n_t} + \mathbf{H}^t\mathbf{H} \mathbf{Q})$.
Using $\mathbf {Q}$ given in \eqref{eq:Q} and applying  \eqref{eq:AB},  it is seen that
$\det(\mathbf{I}_{n_t} + \mathbf{H}^t\mathbf{H} \mathbf{Q}) = \det(\mathbf{I}_{n_t} + \mathbf{V}^t\mathbf{H}^t\mathbf{H} \mathbf{V}\mathbf{\Lambda} ).$
Further, it is straightforward to check that
 \begin{align}
\mathbf{V}^t\mathbf{H}^t\mathbf{H} \mathbf{V} = \left[ \begin{matrix}
w_1 & w_2 \\ w_2 & w_3
\end{matrix} \right],
\end{align}
in which
\begin{subequations}\label{eq:wH}
\begin{align}
w_1 &= h_1\sin^2\theta + h_3\cos^2\theta - 2h_2\sin\theta \cos\theta,  \\
w_2 &= h_2(\cos^2\theta - \sin^2\theta) + (h_3 - h_1)\sin\theta \cos\theta,  \\
w_3 &= h_1\cos^2\theta + h_3\sin^2\theta + 2h_2 \sin\theta \cos\theta.
\end{align}
\end{subequations}
Consequently,
\begin{align}\label{eq:s1}
\det(\mathbf{I}_{n_t} + \mathbf{H}^t\mathbf{H} \mathbf{Q}) &= \det(\mathbf{I}_{n_t} + \mathbf{V}^t\mathbf{H}^t\mathbf{H} \mathbf{V}\mathbf{\Lambda} )  \\
&= (1 + \lambda_1w_1)(1 + \lambda_2w_3)-  \lambda_1 \lambda_2w_2^2. \notag
\end{align}
Next, using the basic trigonometric identities
\begin{subequations}\label{eq:identity}
\begin{align}
\cos 2\theta &= 2\cos^2\theta - 1 = 1- 2\sin^2\theta, \\
\sin 2\theta &= 2\sin\theta \cos \theta ,
\end{align}
\end{subequations}
it is straightforward to show that
\begin{subequations}\label{eq:ws}
\begin{align}
w_1
&=  \frac{h_1 + h_3}{2} +\frac{h_3 - h_1}{2} \cos 2\theta  -h_2\sin2\theta, \label{eq:ws1}\\
w_2
&= h_2\cos2\theta + \frac{h_3 - h_1}{2}\sin2\theta \label{eq:ws2},\\
w_3
& =  \frac{h_1 + h_3}{2} - \frac{h_3 - h_1}{2} \cos 2\theta  + h_2\sin2\theta \label{eq:ws4}.
\end{align}
\end{subequations}
Substituting \eqref{eq:ws1}-\eqref{eq:ws4} in \eqref{eq:s1}, we obtain
\begin{align}
\det(\mathbf{I}_{n_t} + \mathbf{H}^t\mathbf{H} \mathbf{Q})
& = a_1 \sin2\theta + b_1 \cos2\theta + c_1,
\end{align}
in which $a_1$,  $b_1$, and $c_1$ are given in \eqref{eq:num}.
Following similar steps it is clear that
\begin{align}
\det(\mathbf{I}_{n_t} + \mathbf{G}^t \mathbf{G} \mathbf{Q}) = a_2 \sin2\theta + b_2 \cos2\theta + c_2,
\end{align}
where $a_2$,  $b_2$, and $c_2$ are given in \eqref{eq:den}. It should be mentioned that the constraint $\lambda_1 + \lambda_2 \le P$
comes from  $\mathrm{tr} (\mathbf{Q}) \le P$ since, from \eqref{eq:Q},
$\mathrm{tr} (\mathbf{Q}) = \mathrm{tr} (\mathbf{V} \mathbf {\Lambda } \mathbf{V}^t )
= \mathrm{tr} (\mathbf{V}^t \mathbf{V} \mathbf {\Lambda } ) = \mathrm{tr} (\mathbf {\Lambda } ).$
Note that $\mathrm{tr} ( \mathbf {AB}) = \mathrm{tr} ( \mathbf {BA})$  and  $\mathbf{V}^t \mathbf{V} = \mathbf {I}_{n_t}$. Also, $\lambda_1\ge 0$ and $\lambda_2\ge 0$ are due to $\mathbf{Q}  \succeq \mathbf{0}$.
This completes the proof of Lemma~\ref{lem:equi-opt}.
\end{proof}

\begin{figure*}[t]
\centering
\includegraphics[scale=.4]{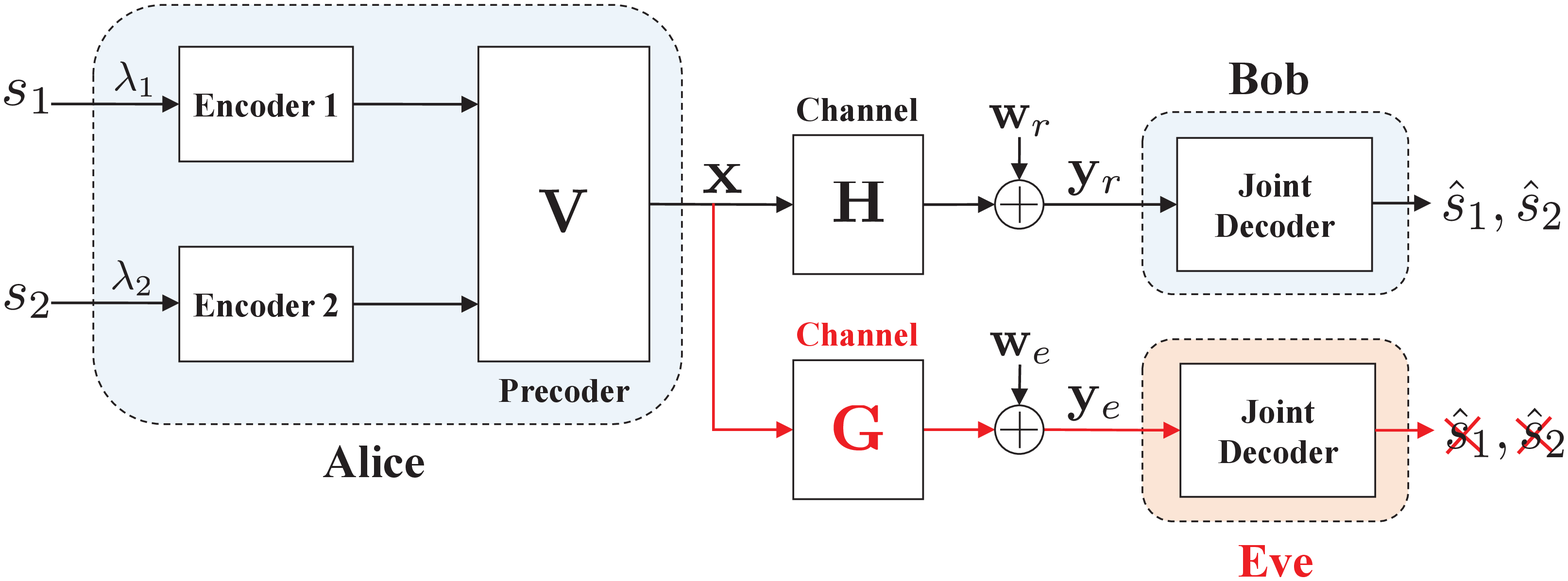}
\caption{Optimal architecture for communicating
over the MIMO Gaussian wiretap channel with $n_t= 2$ and arbitrary $n_r $ and $n_e$.
}
\label{fig:diagram}
\end{figure*}

\begin{lem}\label{lem:lsump}
In the optimization problem given by Lemma~\ref{lem:equi-opt}, the constraint $\lambda_1 + \lambda_2 \le P$
can be replaced either by $\lambda_1 + \lambda_2 = P$ or $\lambda_1 + \lambda_2 = 0$; i.e., it is optimal to use either all available power or nothing.
\end{lem}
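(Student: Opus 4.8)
The plan is to go back to the matrix formulation \eqref{eq:cap} rather than to the scalarized objective of Lemma~\ref{lem:equi-opt}. Writing $\mathbf{A}:=\mathbf{H}^t\mathbf{H}$ and $\mathbf{B}:=\mathbf{G}^t\mathbf{G}$ (both $2\times 2$, symmetric, positive semidefinite), the problem is $\max_{\mathbf{Q}\in\mathcal{F}} f(\mathbf{Q})$ with $f(\mathbf{Q}):=\tfrac12\log\frac{\det(\mathbf{I}_{2}+\mathbf{A}\mathbf{Q})}{\det(\mathbf{I}_{2}+\mathbf{B}\mathbf{Q})}$ and $\mathcal{F}:=\{\mathbf{Q}=\mathbf{Q}^t\succeq\mathbf{0}:\ \mathrm{tr}(\mathbf{Q})\le P\}$, where $\mathrm{tr}(\mathbf{Q})=\lambda_1+\lambda_2$. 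A maximizer $\mathbf{Q}^\star$ exists since $\mathcal{F}$ is compact and $f$ is continuous on $\mathcal F$ ($\det(\mathbf{I}_2+\mathbf{A}\mathbf{Q})\ge 1$ there, and likewise for $\mathbf{B}$). I will show that if $0<\mathrm{tr}(\mathbf{Q}^\star)<P$, then $f$ attains the same maximal value at some $\mathbf{Q}$ with $\mathrm{tr}(\mathbf{Q})\in\{0,P\}$ (in fact at $\mathbf{Q}=\mathbf{0}$ or at a rank-one full-power matrix), which is exactly the assertion of the lemma.

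First I would record the tools. Since $\mathbf{A},\mathbf{Q}\succeq\mathbf{0}$, every eigenvalue of $\mathbf{A}\mathbf{Q}$ is nonnegative, so $\mathbf{I}_2+\mathbf{A}\mathbf{Q}$ is invertible with determinant $\ge 1$, and by the push-through identity $(\mathbf{I}_2+\mathbf{A}\mathbf{Q})^{-1}\mathbf{A}=\mathbf{A}(\mathbf{I}_2+\mathbf{Q}\mathbf{A})^{-1}$ is symmetric; the Euclidean gradient is $\nabla f(\mathbf{Q})=\tfrac12\big[(\mathbf{I}_2+\mathbf{A}\mathbf{Q})^{-1}\mathbf{A}-(\mathbf{I}_2+\mathbf{B}\mathbf{Q})^{-1}\mathbf{B}\big]$. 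For a vector $\mathbf{q}$ one has $\det(\mathbf{I}_2+\lambda\mathbf{A}\mathbf{q}\mathbf{q}^t)=1+\lambda\,\mathbf{q}^t\mathbf{A}\mathbf{q}$ and, by the Sherman--Morrison formula, $(\mathbf{I}_2+\lambda\mathbf{A}\mathbf{q}\mathbf{q}^t)^{-1}\mathbf{A}\mathbf{q}=(1+\lambda\,\mathbf{q}^t\mathbf{A}\mathbf{q})^{-1}\mathbf{A}\mathbf{q}$.

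Assume now $0<\mathrm{tr}(\mathbf{Q}^\star)<P$, so the power constraint is inactive at $\mathbf{Q}^\star$; since $\mathbf{Q}^\star$ is a nonzero $2\times2$ PSD matrix, it has rank $1$ or $2$. If $\mathbf{Q}^\star=\lambda\,\mathbf{q}\mathbf{q}^t$ with $\lambda>0$, then $\pm\mathbf{q}\mathbf{q}^t$ are both feasible perturbation directions, so $\mathbf{q}^t\nabla f(\mathbf{Q}^\star)\mathbf{q}=0$; substituting the Sherman--Morrison identity, this reduces to $\mathbf{q}^t\mathbf{A}\mathbf{q}=\mathbf{q}^t\mathbf{B}\mathbf{q}$, whence $f(\mathbf{Q}^\star)=\tfrac12\log\frac{1+\lambda\,\mathbf{q}^t\mathbf{A}\mathbf{q}}{1+\lambda\,\mathbf{q}^t\mathbf{B}\mathbf{q}}=0=f(\mathbf{0})$, and the maximum is also attained at $\mathbf{Q}=\mathbf{0}$, i.e. at $\lambda_1+\lambda_2=0$. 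If instead $\mathbf{Q}^\star\succ\mathbf{0}$, then $\nabla f(\mathbf{Q}^\star)=\mathbf{0}$, i.e. $(\mathbf{I}_2+\mathbf{A}\mathbf{Q}^\star)^{-1}\mathbf{A}=(\mathbf{I}_2+\mathbf{B}\mathbf{Q}^\star)^{-1}\mathbf{B}$; comparing ranks of the two sides gives $\rank\mathbf{A}=\rank\mathbf{B}$. If both have rank $2$, inverting the equation yields $\mathbf{A}^{-1}=\mathbf{B}^{-1}$, hence $\mathbf{A}=\mathbf{B}$ and $f\equiv 0$, so again the maximum is attained at $\mathbf{Q}=\mathbf{0}$. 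If both have rank $\le1$ (the cases $\mathbf{A}=\mathbf{0}$ or $\mathbf{B}=\mathbf{0}$ being immediate), then, again via Sherman--Morrison, the stationarity equation forces $\mathbf{A}$ and $\mathbf{B}$ to share a common unit eigenvector $\mathbf{h}$, so $f(\mathbf{Q})=\tfrac12\log\frac{1+a\,\mathbf{h}^t\mathbf{Q}\mathbf{h}}{1+b\,\mathbf{h}^t\mathbf{Q}\mathbf{h}}$ for some $a,b\ge0$, a monotone function of the scalar $\mathbf{h}^t\mathbf{Q}\mathbf{h}\in[0,P]$ whose maximum over $\mathcal{F}$ is attained at $\mathbf{h}^t\mathbf{Q}\mathbf{h}=P$ (at $\mathbf{Q}=P\mathbf{h}\mathbf{h}^t$, trace $P$) when $a>b$ and at $\mathbf{h}^t\mathbf{Q}\mathbf{h}=0$ (at $\mathbf{Q}=\mathbf{0}$) otherwise. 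In every case the maximum of $f$ over $\mathcal{F}$ is attained at some $\mathbf{Q}$ with $\mathrm{tr}(\mathbf{Q})=\lambda_1+\lambda_2\in\{0,P\}$, which proves the lemma.

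The step that needs the most care is the rank-one case, because the first-order condition there is not $\nabla f=\mathbf{0}$ but only the vanishing of the derivative along the two-sided feasible directions of the PSD cone at a rank-deficient point, and it is precisely the direction $\pm\mathbf{q}\mathbf{q}^t$ that yields $\mathbf{q}^t\mathbf{A}\mathbf{q}=\mathbf{q}^t\mathbf{B}\mathbf{q}$ and thereby forces the secrecy rate at an interior rank-one stationary point to be zero. Everything else is linear algebra specific to $n_t=2$: the finiteness of the rank split and the collapse of the full-rank stationarity equation to $\mathbf{A}=\mathbf{B}$ or to a common eigenvector. I would also double-check the trivial sub-cases $\mathbf{A}=\mathbf{0}$ (no legitimate channel, so $f\le0$ with maximum $0$ at $\mathbf{Q}=\mathbf{0}$) and $\mathbf{B}=\mathbf{0}$ (no eavesdropper, so $f$ is monotone in the Loewner order and maximized on $\mathrm{tr}(\mathbf{Q})=P$), but in both the conclusion is immediate.
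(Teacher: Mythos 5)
Your proof is correct, but it takes a genuinely different route from the paper's. The paper's Appendix~A is a two-line citation argument: it invokes the dichotomy of \cite{oggier2011secrecy} that the secrecy capacity of \eqref{eq:cap0} is zero if and only if $\mathbf{H}^t\mathbf{H}-\mathbf{G}^t\mathbf{G}\preceq 0$ (in which case $(\lambda_1,\lambda_2)=(0,0)$ is optimal) and otherwise is strictly positive with full power being optimal, and then notes that Lemma~\ref{lem:equi-opt} is just a reparametrization of \eqref{eq:cap0}. You instead prove the statement from scratch by a first-order optimality analysis on the $2\times 2$ PSD cone: assuming a maximizer with inactive power constraint, you split on $\mathrm{rank}(\mathbf{Q}^\star)$, use the two-sided feasible direction $\pm\mathbf{q}\mathbf{q}^t$ together with Sherman--Morrison to show that an interior rank-one stationary point has zero rate, and collapse the full-rank stationarity condition $(\mathbf{I}_2+\mathbf{A}\mathbf{Q}^\star)^{-1}\mathbf{A}=(\mathbf{I}_2+\mathbf{B}\mathbf{Q}^\star)^{-1}\mathbf{B}$ to either $\mathbf{A}=\mathbf{B}$ or a common eigenvector, in which cases the objective is identically zero or monotone in the scalar $\mathbf{h}^t\mathbf{Q}\mathbf{h}$. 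I checked the individual steps (the gradient formula, the symmetry via the push-through identity, the rank comparison, and the reduction $\frac{\mathbf{q}^t\mathbf{A}\mathbf{q}}{1+\lambda\mathbf{q}^t\mathbf{A}\mathbf{q}}=\frac{\mathbf{q}^t\mathbf{B}\mathbf{q}}{1+\lambda\mathbf{q}^t\mathbf{B}\mathbf{q}}\Rightarrow\mathbf{q}^t\mathbf{A}\mathbf{q}=\mathbf{q}^t\mathbf{B}\mathbf{q}$ by strict monotonicity of $x\mapsto x/(1+\lambda x)$) and they hold; the case enumeration is exhaustive. What each approach buys: the paper's proof is shorter and rests on a result valid for arbitrary $n_t$, but it leaves the ``full power is optimal when $C_s>0$'' step entirely to the cited reference; your proof is self-contained and elementary, exploits the $n_t=2$ structure to enumerate all interior stationary points explicitly, and as a by-product shows that any maximizer with $0<\mathrm{tr}(\mathbf{Q}^\star)<P$ can only occur when the secrecy rate is zero or when $\mathbf{H}^t\mathbf{H}$ and $\mathbf{G}^t\mathbf{G}$ share an eigenvector---structural information the citation-based proof does not reveal.
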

\begin{proof}
See Appendix A.
\end{proof}

\subsection{Optimal Precoding}
\label{sec:pre}
In what follows, we first find a closed-form solution for the optimization problem
in Lemma~\ref{lem:equi-opt} for a given pair of $\lambda_1$ and $\lambda_2$
that satisfy the constraints.
Since $\log(x)$ is strictly increasing in $x$, we can instead maximize the argument of the logarithm in  \eqref{eq:cap2}.
Thus, let us define
\begin{align} \label{eq:W}
W = \frac{a_1\sin2\theta + b_1 \cos2\theta + c_1 }{a_2\sin2\theta + b_2 \cos2\theta + c_2 }.
\end{align}
Then, $\theta^* = \arg \max W$ and is obtained by differentiating
$W$ with respect to $\theta$ and finding its critical points. It can be checked that
 $\frac{\partial W}{\partial \theta}= 0$ is equivalent to
\begin{align}\label{eq:equation}
 a\sin2\theta + b\cos2\theta + c = 0,
\end{align}
in which
\begin{subequations}\label{eq:abc}
\begin{align}
a &= c_1 b_2 -c_2 b_1,  \\
b &= a_1 c_2 - a_2 c_1, \\
c &=  a_1 b_2 -a_2 b_1.
\end{align}
\end{subequations}
Before proceeding, we  note that $W$ is periodic in $\theta$ and its period is $\pi$.
Also, it can be checked that if both $a$ and $b$
are zero, then $\frac{a_1}{a_2} = \frac{b_1}{b_2} = \frac{c_1}{c_2} $ and $W$ is constant; i.e., any $\theta$
is optimal.  Thus, we assume  $a^2 + b^2 \neq 0.$
%
%
Defining  $\frac{b}{a} = \tan\phi$,
  \eqref{eq:equation} can be further simplified as
\begin{align}\label{eq:rondsimp}
\sin(2\theta + \phi) + \frac{c}{\sqrt{a^2+b^2}} = 0.
\end{align}
The critical points of the above equation are given by
\begin{align}\label{eq:thetaOPT1}
2\theta =
\begin{cases}
    - \arctan \frac{b}{a} - \arcsin \frac{c}{\sqrt{a^2+b^2}}+2k\pi  \\
   -  \arctan \frac{b}{a} + \pi + \arcsin \frac{c}{\sqrt{a^2+b^2}} +2k\pi  \\
\end{cases},
\end{align}
where $k$ is an integer.\footnote{It should be highlighted that we always have $|c|\le \sqrt{a^2+b^2}$, as
otherwise $W$ would be strictly increasing or strictly decreasing in $\theta$, which is impossible  because $W$ is periodic and continuous.
}
Then, using the second derivative of $W$ with respect to $\theta$,
 we can  verify that the first argument gives  the minimum  of $W$ while the second one gives  its maximum.
For completeness, this is proved in Appendix~B.  Further, without loss of optimality, we let $k=0$ in \eqref{eq:thetaOPT1}.
Hence, the optimal $\theta$ that  maximizes $W$ is obtained by
\begin{align}\label{eq:thetaOPT}
\theta^* =   -\frac{1}{2}  \arctan \frac{b}{a}   +\frac{1}{2} \arcsin \frac{c}{\sqrt{a^2+b^2}} +\frac{\pi}{2}.
\end{align}

Thus far, the optimal $\theta$ is obtained for given $\lambda_1$ and $\lambda_2$. To find the
optimal  $\lambda_1$  and $\lambda_2$, in light of  Lemma~\ref{lem:lsump}, we can
search over all $\lambda_1\ge0$ and $\lambda_2\ge0$  that satisfy
$\lambda_1 + \lambda_2 =P $ or $\lambda_1 + \lambda_2 =0 $ and  maximize  \eqref{eq:W} where $\theta$ is given in \eqref{eq:thetaOPT}.
We can vary $\lambda_1$ from $0$ to $P$. Therefore, we have the following.

\begin{thm}\label{thm:mimome}
 To achieve the secrecy capacity of the MIMO Gaussian
wiretap channel (with $n_t = 2$) under the average power constraint $P$, it  suffices to use
\begin{align}
\mathbf{V} = \left[ \begin{matrix}
-\sin \theta & \cos \theta \\ \cos \theta & \sin \theta
\end{matrix} \right],
\end{align}
as the transmit beamformer 
with the power allocation matrix
\begin{align}
\mathbf{\Lambda} = \left[ \begin{matrix}
\lambda_1 & 0 \\ 0 & \lambda_2
\end{matrix} \right].
\end{align}
An optimal $\theta$ is given by \eqref{eq:thetaOPT} and is obtained by searching over nonnegative $\lambda_1$ and $\lambda_2$
that satisfy $\lambda_1+ \lambda_2 = P$ or $\lambda_1+ \lambda_2 = 0$ and maximize \eqref{eq:W}.
\end{thm}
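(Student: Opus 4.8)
The plan is to assemble Theorem~\ref{thm:mimome} from the three preparatory facts already in hand: the reformulation of Lemma~\ref{lem:equi-opt}, the power‑boundary reduction of Lemma~\ref{lem:lsump}, and the closed‑form $\theta$‑maximization ending in \eqref{eq:thetaOPT}. First I would recall that, by \eqref{eq:cap0}--\eqref{eq:cap}, the secrecy capacity is $\tfrac12\log$ of the maximum over all symmetric positive semidefinite $\mathbf{Q}$ with $\mathrm{tr}(\mathbf{Q})\le P$ of the ratio in \eqref{eq:cap}, and that a zero‑mean Gaussian input with covariance $\mathbf{Q}$ attains it. For $n_t=2$, every such $\mathbf{Q}$ has an eigendecomposition $\mathbf{Q}=\mathbf{V}\mathbf{\Lambda}\mathbf{V}^t$ with $\mathbf{V}$ of the form \eqref{eq:V} for some $\theta$ and $\mathbf{\Lambda}=\mathrm{diag}(\lambda_1,\lambda_2)$, $\lambda_i\ge0$; conversely any such pair gives a feasible $\mathbf{Q}$. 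Hence transmitting $\mathbf{x}=\mathbf{V}\mathbf{s}$ with $\mathbf{s}$ Gaussian of covariance $\mathbf{\Lambda}$ realizes exactly this family of covariances, so it suffices to optimize jointly over $(\theta,\lambda_1,\lambda_2)$.

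Next I would invoke Lemma~\ref{lem:equi-opt} to rewrite the objective as the ratio in \eqref{eq:cap2}, with the coefficients \eqref{eq:num}--\eqref{eq:den} depending only on $(\lambda_1,\lambda_2)$ and on the fixed entries \eqref{eq:HG} of $\mathbf{H}^t\mathbf{H}$ and $\mathbf{G}^t\mathbf{G}$, and then Lemma~\ref{lem:lsump} to replace the constraint $\lambda_1+\lambda_2\le P$ by $\lambda_1+\lambda_2=P$ or $\lambda_1+\lambda_2=0$. At this stage the outer optimization has collapsed to a search over the single scalar $\lambda_1\in[0,P]$ (with $\lambda_2=P-\lambda_1$), together with the isolated point $\lambda_1=\lambda_2=0$, which gives $\mathbf{Q}=\mathbf{0}$ and rate $0$ and is needed precisely when the secrecy capacity vanishes.

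For each fixed admissible $(\lambda_1,\lambda_2)$ I would then run the inner maximization over $\theta$ exactly as in \eqref{eq:W}--\eqref{eq:thetaOPT}: since $\log$ is strictly increasing it is equivalent to maximize $W$; $\partial W/\partial\theta=0$ reduces to the single sinusoidal equation \eqref{eq:equation} with $a,b,c$ as in \eqref{eq:abc}; with $\tan\phi=b/a$ this is \eqref{eq:rondsimp}, whose solutions within one period $\pi$ are the two families in \eqref{eq:thetaOPT1}; and the second‑derivative test (Appendix~B) isolates \eqref{eq:thetaOPT} as the maximizer, the degenerate case $a^2+b^2=0$ (equivalently $a_1/a_2=b_1/b_2=c_1/c_2$, so $W$ is constant) being harmless since then any $\theta$ is optimal. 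Because $W$ has period $\pi$ one may set $k=0$, so \eqref{eq:thetaOPT} is the global optimum in $\theta$. Combining the inner and outer steps, the maximum in \eqref{eq:cap2}, and hence the secrecy capacity, is attained by the beamformer $\mathbf{V}$ of \eqref{eq:V} with $\theta=\theta^{*}$ from \eqref{eq:thetaOPT} and $\mathbf{\Lambda}=\mathrm{diag}(\lambda_1,\lambda_2)$ for the best $(\lambda_1,\lambda_2)$ on the reduced constraint set, which is the assertion.

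Since the substantive work is already done in Lemmas~\ref{lem:equi-opt} and~\ref{lem:lsump} and in Appendix~B, no single step is a true obstacle; the point demanding the most care is the inner $\theta$‑optimization — checking that $\partial W/\partial\theta=0$ genuinely collapses to \eqref{eq:equation}, that $|c|\le\sqrt{a^2+b^2}$ always holds so that the $\arcsin$ in \eqref{eq:thetaOPT} is well defined, and that the second critical‑point family is the maximizer over a full period rather than merely a local extremum. A secondary subtlety worth stating explicitly is that, unlike ordinary MIMO water‑filling, the power split $(\lambda_1,\lambda_2)$ here must in general be searched; Lemma~\ref{lem:lsump} is exactly what keeps that search one‑dimensional and bounded.
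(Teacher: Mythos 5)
Your proposal is correct and follows essentially the same route as the paper: the theorem is obtained there exactly by combining the eigendecomposition parametrization \eqref{eq:V}--\eqref{eq:Q}, Lemma~\ref{lem:equi-opt}, Lemma~\ref{lem:lsump}, and the inner $\theta$-maximization culminating in \eqref{eq:thetaOPT} (with the second-derivative check deferred to Appendix~B), just as you describe. Your added remarks on the degenerate case $a^2+b^2=0$, the bound $|c|\le\sqrt{a^2+b^2}$, and the fact that any $2\times 2$ orthogonal $\mathbf{V}$ is captured by \eqref{eq:V} up to irrelevant column signs are all consistent with the paper's own treatment.
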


Once the optimal $\mathbf{V}$,  $\lambda_1$, and $\lambda_2$ are determined, these can be used
for precoding and power allocation as illustrated in Fig~\ref{fig:diagram},
similarly to the  V-BLAST architecture for communicating
over the MIMO channel \cite{tse2005fundamentals}.
Here, two ($n_t = 2$) independent data streams  are multiplexed in the
coordinate system given by the precoding matrix $\mathbf{V}$.  The $i$th data stream is allocated a power $\lambda_i$.
Each stream is encoded using a capacity-achieving Gaussian code.
The data streams are decoded jointly.
When the orthogonal matrix $\mathbf{V}$ and powers $\lambda_i$  are chosen as described in Theorem~\ref{thm:mimome}, then we
have the capacity-achieving architecture in Fig~\ref{fig:diagram}.\footnote{It is worth mentioning that we can come up with another orthogonal matrix $\mathbf{U}$, to
express the output in terms of its columns,  such that the input/output relationship
is very simple and independent decoding is optimal. }

\begin{lem}\label{lem:sym}
With a proper choice of $\theta$, the pairs $(\lambda_1 ,\lambda_2)$ and $(\lambda_2 ,\lambda_1)$
result in the same maximum rate in Lemma~\ref{lem:equi-opt}.
\end{lem}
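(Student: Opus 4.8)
The plan is to exploit the structure of the coefficients in \eqref{eq:num} and \eqref{eq:den} under the swap $(\lambda_1,\lambda_2)\mapsto(\lambda_2,\lambda_1)$. Inspecting those formulas, swapping $\lambda_1$ and $\lambda_2$ leaves $c_1,c_2$ unchanged (they depend only on the symmetric quantities $\lambda_1+\lambda_2$ and $\lambda_1\lambda_2$), while it flips the sign of $a_1,b_1,a_2,b_2$ (each carries a factor $\lambda_2-\lambda_1$ or $\lambda_1-\lambda_2$). So if we denote by $W(\theta;\lambda_1,\lambda_2)$ the ratio in \eqref{eq:W}, then
\begin{align*}
W(\theta;\lambda_2,\lambda_1) = \frac{-a_1\sin2\theta - b_1\cos2\theta + c_1}{-a_2\sin2\theta - b_2\cos2\theta + c_2}.
\end{align*}

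First I would observe that replacing $\theta$ by $\theta' = \theta + \tfrac{\pi}{2}$ sends $\sin2\theta \mapsto -\sin2\theta$ and $\cos2\theta\mapsto -\cos2\theta$ (since $\sin(2\theta+\pi)=-\sin2\theta$, $\cos(2\theta+\pi)=-\cos2\theta$). Hence $W(\theta+\tfrac{\pi}{2};\lambda_1,\lambda_2) = W(\theta;\lambda_2,\lambda_1)$ identically in $\theta$. Therefore the set of values $\{W(\theta;\lambda_1,\lambda_2):\theta\in[0,\pi)\}$ and the set $\{W(\theta;\lambda_2,\lambda_1):\theta\in[0,\pi)\}$ coincide, and in particular their suprema are equal. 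Since the rate in Lemma~\ref{lem:equi-opt} is $\tfrac12\log$ of this supremum over $\theta$, the two power assignments give the same maximum rate. Concretely, if $\theta^*$ is optimal for $(\lambda_1,\lambda_2)$, then $\theta^* - \tfrac{\pi}{2}$ is optimal for $(\lambda_2,\lambda_1)$, which is the ``proper choice of $\theta$'' in the statement — and it is a legitimate choice because, as noted in the text, $W$ is $\pi$-periodic so $\theta$ ranges over a full period.

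The key steps in order: (i) write out the effect of the swap on each of $a_1,b_1,c_1,a_2,b_2,c_2$ from \eqref{eq:num}–\eqref{eq:den}, noting the sign flip on the $a$'s and $b$'s and invariance of the $c$'s; (ii) write out the effect of the shift $\theta\mapsto\theta+\tfrac\pi2$ on $\sin2\theta$ and $\cos2\theta$ via \eqref{eq:identity} (or directly from periodicity); (iii) conclude the functional identity $W(\theta+\tfrac\pi2;\lambda_1,\lambda_2)=W(\theta;\lambda_2,\lambda_1)$; (iv) take the maximum over $\theta$ on both sides, using $\pi$-periodicity of $W$ so that shifting the argument does not change the range of the maximization. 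There is essentially no analytic obstacle here; the only thing to be careful about is step (iv) — making sure the domain of $\theta$ is a full period so the shift is harmless — and being explicit that this matching of ranges is exactly what licenses the phrase ``with a proper choice of $\theta$.'' This is a short symmetry argument rather than a computation.
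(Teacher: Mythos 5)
Your proof is correct and is essentially the paper's own argument (Appendix C): the swap $(\lambda_1,\lambda_2)\mapsto(\lambda_2,\lambda_1)$ negates $a_1,b_1,a_2,b_2$ and fixes $c_1,c_2$, which is undone by the shift $\theta\mapsto\theta+\tfrac{\pi}{2}$. You merely spell out more explicitly the functional identity $W(\theta+\tfrac{\pi}{2};\lambda_1,\lambda_2)=W(\theta;\lambda_2,\lambda_1)$ and the role of $\pi$-periodicity, which the paper leaves as ``easy to check.''
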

\begin{proof}
See Appendix C.
\end{proof}
This lemma implies that to find optimal $(\lambda_1 ,\lambda_2)$
in Theorem~\ref{thm:mimome}, it suffices to search for
$\lambda_1$  in $[0\; \frac{P}{2}]$ rather than $[0\; P]$.
\subsection{Special Cases}
\label{sec:spe}
The first special case of the MIMO  Gaussian wiretap channel  we  consider  is  the MISO Gaussian wiretap channel.
In the following  corollary, we prove that a positive capacity for the MISO case is obtained by signaling with rank one covariance.
This has already been shown in \cite{khisti2010secureMISOME} using a different argument.  

\begin{cor}\label{cor:misome}
For the MISO Gaussian wiretap channel, Theorem~\ref{thm:mimome} significantly simplifies and
 ($\lambda_1, \lambda_2) = (0, P)$ or ($\lambda_1, \lambda_2) = (0, 0)$ is the optimal solution.
The optimal $\theta$ is then obtained from \eqref{eq:thetaOPT}. 
\end{cor}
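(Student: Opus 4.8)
The plan is to combine Theorem~\ref{thm:mimome} with the rank deficiency of $\mathbf{H}^t\mathbf{H}$ that is forced by $n_r=1$. For the MISO channel $\mathbf{H}\in\mathbb{R}^{1\times 2}$, so $\mathbf{H}^t\mathbf{H}$ has rank at most one and $\det(\mathbf{H}^t\mathbf{H})=h_1h_3-h_2^2=0$; in particular the $\lambda_1\lambda_2$ term drops out of $c_1$ in \eqref{eq:num}. By Lemma~\ref{lem:lsump} it is optimal to take either $\lambda_1+\lambda_2=0$, which is the trivial point $(\lambda_1,\lambda_2)=(0,0)$, or $\lambda_1+\lambda_2=P$. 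It therefore remains to show that, under $\lambda_1+\lambda_2=P$, the maximum in Lemma~\ref{lem:equi-opt} is attained with $\lambda_1\lambda_2=0$, i.e., at $(\lambda_1,\lambda_2)=(0,P)$ (equivalently $(P,0)$, by Lemma~\ref{lem:sym}); the corresponding $\theta$ is then read off from \eqref{eq:thetaOPT}, so no search over the power split is needed.

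For this reduction I would go back to the matrix form of the objective. Using the identity $\det(\mathbf{I}_2+\mathbf{M})=1+\tr(\mathbf{M})+\det(\mathbf{M})$, valid for any $2\times 2$ matrix $\mathbf{M}$, together with $\det(\mathbf{H}^t\mathbf{H}\mathbf{Q})=\det(\mathbf{H}^t\mathbf{H})\det(\mathbf{Q})=0$ and the analogous expansion for $\mathbf{G}$, the ratio $W$ in \eqref{eq:W} can be written, for any $\mathbf{Q}\succeq\mathbf{0}$ with $\tr(\mathbf{Q})=P$, as
\begin{equation*}
W=\frac{1+\tr(\mathbf{H}^t\mathbf{H}\mathbf{Q})}{1+\tr(\mathbf{G}^t\mathbf{G}\mathbf{Q})+\det(\mathbf{G}^t\mathbf{G})\,\det(\mathbf{Q})}.
\end{equation*}
Since $\mathbf{G}^t\mathbf{G}$ is a Gram matrix, $\det(\mathbf{G}^t\mathbf{G})\ge 0$, and $\det(\mathbf{Q})=\lambda_1\lambda_2\ge 0$, so
\begin{equation*}
W\le\widetilde W(\mathbf{Q}):=\frac{1+\tr(\mathbf{H}^t\mathbf{H}\mathbf{Q})}{1+\tr(\mathbf{G}^t\mathbf{G}\mathbf{Q})},
\end{equation*}
with equality whenever $\rank(\mathbf{Q})\le 1$. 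Both the numerator and the denominator of $\widetilde W$ are affine in $\mathbf{Q}$ and the denominator is at least $1$ on the feasible set $\mathcal{Q}:=\{\mathbf{Q}\succeq\mathbf{0}:\tr(\mathbf{Q})=P\}$, so $\widetilde W$ is a linear-fractional, hence quasi-convex, function of $\mathbf{Q}$ on the compact convex set $\mathcal{Q}$.

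A continuous quasi-convex function attains its maximum over $\mathcal{Q}$ at an extreme point of $\mathcal{Q}$, and the extreme points of $\mathcal{Q}$ are precisely the rank-one matrices $P\mathbf{v}\mathbf{v}^t$ with $\|\mathbf{v}\|=1$ (any feasible $\mathbf{Q}$ with two positive eigenvalues is a proper convex combination of the two rank-one matrices along its eigenvectors). Since $W=\widetilde W$ at every rank-one $\mathbf{Q}$, it follows that $\max_{\mathbf{Q}\in\mathcal{Q}}W=\max_{\mathbf{Q}\in\mathcal{Q}}\widetilde W$ is attained by a rank-one $\mathbf{Q}$; in the variables of Theorem~\ref{thm:mimome} this is exactly $(\lambda_1,\lambda_2)=(0,P)$ up to relabeling, and the optimal $\theta$ is then given by \eqref{eq:thetaOPT}. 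The crux of the argument --- and the step I would treat most carefully --- is the passage from $W$ to the linear-fractional bound $\widetilde W$ and the observation that the bound is tight exactly on the rank-one matrices over which one is optimizing; the two ingredients behind it, $\det(\mathbf{H}^t\mathbf{H})=0$ (from $n_r=1$) and $\det(\mathbf{G}^t\mathbf{G})\ge 0$ (valid for every $n_e$), are also what make the conclusion independent of the eavesdropper's number of antennas.
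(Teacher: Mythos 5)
Your proof is correct, but it takes a genuinely different route from the paper's. The paper's own proof of Corollary~\ref{cor:misome} simply \emph{imports} the rank result: it cites \cite[Theorem~2]{khisti2010secureMISOME} (or \cite{loyka2016optimal}) for the fact that the optimal covariance of the MISO multi-eavesdropper wiretap channel has rank at most one, and then reads off $(\lambda_1,\lambda_2)\in\{(0,0),(0,P),(P,0)\}$ from Theorem~\ref{thm:mimome}, Lemma~\ref{lem:lsump}, and Lemma~\ref{lem:sym}. You instead \emph{prove} the rank-one property from first principles: $n_r=1$ forces $\det(\mathbf{H}^t\mathbf{H})=0$, so $\det(\mathbf{I}_2+\mathbf{H}^t\mathbf{H}\mathbf{Q})=1+\mathrm{tr}(\mathbf{H}^t\mathbf{H}\mathbf{Q})$ is affine in $\mathbf{Q}$, while discarding the nonnegative term $\det(\mathbf{G}^t\mathbf{G})\det(\mathbf{Q})$ from the denominator gives a linear-fractional upper bound $\widetilde W$ on $W$ that is tight exactly on rank-one $\mathbf{Q}$; since $\widetilde W$ is quasi-convex and the extreme points of $\{\mathbf{Q}\succeq\mathbf{0}:\mathrm{tr}(\mathbf{Q})=P\}$ are the rank-one matrices, the maximum of $W$ is sandwiched onto a rank-one covariance. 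All the individual steps check out (the $2\times2$ determinant expansion, $\det(\mathbf{G}^t\mathbf{G})\ge 0$, positivity of the numerator, the characterization of the extreme points); the only step deserving an explicit line is the Bauer-type claim that a quasi-convex function attains its maximum at an extreme point, which here can be made elementary by noting that $\widetilde W$ restricted to the segment joining the two rank-one spectral components of any feasible $\mathbf{Q}$ is a monotone M\"obius function of the interpolation parameter, hence maximized at an endpoint. What your approach buys is self-containedness and transparency --- it isolates the two facts that actually matter, $\det(\mathbf{H}^t\mathbf{H})=0$ (from $n_r=1$) and $\det(\mathbf{G}^t\mathbf{G})\ge 0$ (any $n_e$), which is precisely why the conclusion is independent of the eavesdropper's antenna count. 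What the paper's route buys is brevity, at the cost of relying on an external result whose proof is by a different (generalized-eigenvalue) argument.
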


 \begin{proof}
In the case of the MISO multi-eavesdropper  wiretap channel it is known that the rank of the covariance matrix is either one or zero
(see  \cite[Theorem~2]{khisti2010secureMISOME} or \cite{loyka2016optimal}).
 In the latter case, it is trivial that ($\lambda_1, \lambda_2) = (0, 0)$ is an optimal solution. In the former case,
from Theorem~\ref{thm:mimome} we can see that a rank-one solution implies that  either $\lambda_1$ or $\lambda_2$ is equal to zero.
Then, from Lemma~\ref{lem:lsump} we conclude that ($\lambda_1, \lambda_2) = (P, 0)$ or ($\lambda_1, \lambda_2) = (0, P)$.
But, in view of Lemma~\ref{lem:sym}, we know that with proper choice of $\theta$ these two cases result in the same maximum rates; thus, one of them can be removed.
\end{proof}

Another special case of the MIMO Gaussian wiretap channel
is the case in which  the eavesdropper has only one antenna. Specifically, by setting
$n_e = 1$ in Theorem~\ref{thm:mimome} we  get
\begin{cor}\label{cor:mimose}
For the $2$-$n_r$-$1$ Gaussian wiretap channel, optimal transmit covariance matrix is at most unit-rank.
In particular,  either ($\lambda_1, \lambda_2) = (0, P)$ or ($\lambda_1, \lambda_2) = (0, 0)$ gives the optimal solution in Theorem~\ref{thm:mimome}.
\end{cor}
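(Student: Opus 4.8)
The plan is to combine Lemma~\ref{lem:lsump}, the structural simplification forced by $n_e=1$, and Lemma~\ref{lem:sym}. By Lemma~\ref{lem:lsump} we may restrict attention to the two regimes $\lambda_1+\lambda_2=0$ and $\lambda_1+\lambda_2=P$. The first gives $\mathbf{Q}=\mathbf{0}$, i.e. $(\lambda_1,\lambda_2)=(0,0)$, which is the rank-zero alternative in the statement, so it remains to treat $\lambda_1+\lambda_2=P$ and to show that an optimal split there puts all the power on a single eigenvector of $\mathbf{Q}$.

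The key simplification is that $\mathbf{G}\in\mathbb{R}^{1\times 2}$ makes $\mathbf{G}^t\mathbf{G}$ rank one, so $g_1g_3-g_2^2=\det(\mathbf{G}^t\mathbf{G})=0$; in \eqref{eq:den} the $\lambda_1\lambda_2$ term of $c_2$ disappears, hence the denominator of \eqref{eq:cap2} is affine in $(\lambda_1,\lambda_2)$, while the numerator still carries the term $\lambda_1\lambda_2\det(\mathbf{H}^t\mathbf{H})$ with $\det(\mathbf{H}^t\mathbf{H})\ge 0$. Equivalently, using $\det(\mathbf{I}_2+\mathbf{A})=1+\operatorname{tr}\mathbf{A}+\det\mathbf{A}$, the objective ratio in \eqref{eq:cap} becomes $\dfrac{1+\operatorname{tr}(\mathbf{H}^t\mathbf{H}\mathbf{Q})+\det(\mathbf{H}^t\mathbf{H})\det\mathbf{Q}}{1+\operatorname{tr}(\mathbf{G}^t\mathbf{G}\mathbf{Q})}$, and with the eigenbasis $\mathbf{V}$ and $\operatorname{tr}\mathbf{Q}=P$ held fixed only $\det\mathbf{Q}=\lambda_1\lambda_2$ distinguishes the splits. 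Then I would fix $\theta$, set $\lambda_2=P-\lambda_1$, and view $W$ in \eqref{eq:W} as a function of $\lambda_1\in[0,P]$: the numerator is a \emph{concave} quadratic in $\lambda_1$ (leading coefficient $-\det(\mathbf{H}^t\mathbf{H})\le 0$) and the denominator is affine and $\ge 1$ on $[0,P]$ since it equals $1+\mathbf{G}\mathbf{Q}\mathbf{G}^t$. Using the denominator value as a new variable, $W$ takes the form ``affine with non-positive slope'' $+\,$``constant/variable'', whose concavity is governed by the sign of the numerator of $W$ at the point where the denominator vanishes. Showing that this point lies outside $[0,P]$ and that the numerator is nonnegative there forces $W$ to be convex along the segment, hence maximized at $\lambda_1\in\{0,P\}$, i.e. at a rank-one $\mathbf{Q}$; the degenerate case $\det(\mathbf{H}^t\mathbf{H})=0$ (e.g. $n_r=1$) is immediate since $W$ is then a monotone linear-fractional function of $\lambda_1$.

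Having shown that, for every $\theta$, the maximum over full-power splits is attained at $(\lambda_1,\lambda_2)\in\{(P,0),(0,P)\}$, I would use Lemma~\ref{lem:sym} to drop $(P,0)$ in favour of $(0,P)$ (with the matching $\theta$), and combine with the $\mathbf{Q}=\mathbf{0}$ alternative to conclude that $(\lambda_1,\lambda_2)=(0,P)$ or $(0,0)$ is optimal in Theorem~\ref{thm:mimome}, so the optimal covariance is at most unit-rank. (Alternatively, as in the proof of Corollary~\ref{cor:misome}, one could first invoke an external rank result for the $2$-$n_r$-$1$ channel and then simply read off the two candidates from Theorem~\ref{thm:mimome} together with Lemmas~\ref{lem:lsump} and \ref{lem:sym}.) The main obstacle is controlling the sign of the numerator of $W$ at the pole of the denominator — that is, ruling out a spurious interior maximizer of the concave-over-affine function $W(\lambda_1)$. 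I expect this to require the positive semidefiniteness of $\mathbf{Q}$ together with the rank-one structure of $\mathbf{G}^t\mathbf{G}$ (so that the ``pole'' corresponds to an indefinite matrix of signature $(+,-)$), and it is conceivable that one must instead re-optimize $\theta$ at the interior point and compare directly with a rank-one configuration.
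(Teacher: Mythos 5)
Your main route is genuinely different from the paper's: the paper disposes of this corollary in two lines, asserting that the objective is ``the inverse'' of the MISO case and deferring to the proof of Corollary~\ref{cor:misome}, which in turn imports the external rank-$\le 1$ result of \cite{khisti2010secureMISOME}. You instead attempt a direct analytic argument on $W(\lambda_1)$ for fixed $\theta$. Your setup is correct as far as it goes: with $n_e=1$ the denominator $1+\mathbf{G}\mathbf{Q}\mathbf{G}^t$ is positive and affine in $\lambda_1$ along $\lambda_1+\lambda_2=P$, the numerator is a concave quadratic with leading coefficient $-\det(\mathbf{H}^t\mathbf{H})$, and, after polynomial division, the convexity of $W$ is indeed governed by the sign of the numerator at the pole of the denominator.

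However, the step you flag as ``the main obstacle'' is not a technical nuisance to be patched; it is where the argument breaks, and it cannot be repaired. Take $\mathbf{H}=\mathbf{I}_2$, $\mathbf{G}=(\epsilon,\,0)$ with $\epsilon^2=0.01$, $P=10$, and $\theta=\pi/2$, so that $\mathbf{Q}=\mathrm{diag}(\lambda_1,\lambda_2)$. Then $W(\lambda_1)=\frac{(1+\lambda_1)(1+P-\lambda_1)}{1+\epsilon^{2}\lambda_1}$; the numerator evaluated at the pole $\lambda_1=-\epsilon^{-2}$ is negative, so $W$ is concave rather than convex on $[0,P]$ and has an interior maximizer near $P/2$, where $W=36/1.05\approx 34.3$. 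By contrast, every rank-one covariance $\mathbf{Q}=P\mathbf{v}\mathbf{v}^t$, over all choices of $\theta$, gives $W=\frac{1+P}{1+P\epsilon^2 v_1^2}\le 1+P=11$. So the interior critical point is not spurious: it is the true optimum, the optimal covariance is full rank, and the conclusion of Corollary~\ref{cor:mimose} itself fails for this channel. (This is consistent with the full-rank solutions of \cite{fakoorian2013full,loyka2016optimal} for strictly degraded channels $\mathbf{H}^t\mathbf{H}\succ\mathbf{G}^t\mathbf{G}$, which can certainly occur with $n_e=1$. What makes the MISO case of Corollary~\ref{cor:misome} work is that $\mathbf{H}^t\mathbf{H}-\mathbf{G}^t\mathbf{G}$ then has at most one positive eigenvalue, which bounds the rank of the optimal covariance; that argument does not ``invert.'') Your parenthetical fallback of citing an external rank result, mirroring the paper's own proof, fails for the same reason: the Khisti--Wornell bound concerns $n_r=1$, not $n_e=1$, and maximizing $A/B$ is not equivalent to maximizing $B/A$, so neither your primary argument nor the paper's can establish the stated claim for general $n_r$.
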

\begin{proof}
The proof is very similar to that of Corollary~\ref{cor:misome} and is omitted.
Note that the objective function, in this case, is in the form of the inverse of that of Corollary~\ref{cor:misome}.
\end{proof}
Note that Corollary~\ref{cor:misome} gives the capacity of $2$-$n_r$-$1$ channels and thus generalizes
 the  result of \cite{shafiee2009towards} for the $2$-$2$-$1$ channel.

\subsection{Closed-Form Solution for Optimal Power Allocation}
\label{sec:analy}

Finding optimal $\lambda_1$ and $\lambda_2$ in Theorem~\ref{thm:mimome} requires an exhaustive search.
Although checking a reasonably small number of ($\lambda_1$, $\lambda_2$)
is enough in practice,\footnote{This is discussed in Section~\ref{sec:sim}.}
in this subsection  we find  a closed-form solution for optimal ($\lambda_1$, $\lambda_2$).

We know that if  $W \leq 1$ then $(\lambda^*_1 ,\lambda^*_2) = (0, 0) $ is the optimal solution.
Thus, let us assume $W > 1$. Then, using Lemma~\ref{lem:lsump}, this  implies that  $\lambda_1 + \lambda_2 = P$ is optimal.
Thus, to find optimal $\lambda_1$ and $\lambda_2$, we can solve the following problem:
\begin{align} \label{eq:capMIMOME}
\mathcal{C}_\mathrm{MIMOME} =  \max_{  \lambda_1 + \lambda_2 = P } \frac{1}{2}   \log (W),
\end{align}
where $W = \det(\mathbf{I}_{n_t} + \mathbf{H}^t\mathbf{H} \mathbf{Q})/\det(\mathbf{I}_{n_t} \!\!+ \mathbf{G}^t \mathbf{G} \mathbf{Q}) $ is given in \eqref{eq:cap}.
To this end, we  define $ a_h \triangleq \frac{h_3 - h_1}{2},  b_h \triangleq -h_2,$
$c_h \triangleq \frac{h_1 + h_3}{2}, d_h \triangleq \sqrt{a_h^2 + b_h^2},$ and  $\frac{b_h}{a_h} \triangleq \tan\phi_h$.
Then, from  \eqref{eq:ws1}-\eqref{eq:ws4}  we will have
\begin{subequations}\label{eq:wssimple}
\begin{align}
w_1 & = c_h + d_h \cos (2\theta -\phi_h), \label{eq:ws1simple} \\
w_2 & = d_h \sin (2\theta -\phi_h), \label{eq:ws2simple} \\
w_3 & = c_h - d_h \cos (2\theta -\phi_h)\label{eq:ws3simple}.
\end{align}
\end{subequations}

\noindent Now, we can write
\begin{align}\label{eq:Wh}
W_h  &= \det(\mathbf{I}_{n_t}\!\! + \mathbf{H}^t\mathbf{H} \mathbf{Q}) \notag \\
&  \stackrel{(a)}{=}(1 + \lambda_1w_1)(1 + \lambda_2w_3)-  \lambda_1 \lambda_2w_2^2 \notag \\
&= 1 + \lambda_1 w_1  + \lambda_2 w_3 +  \lambda_1 \lambda_2 (w_1 w_3 -w_2^2) \notag \\
&  \stackrel{(b)}{=} 1 + \lambda_1 w_1  + \lambda_2 w_3 +  \lambda_1 \lambda_2 (h_1 h_3- h_2^2) \notag \\
& \stackrel{(c)}{=} 1 + (\lambda_1\! + \!\lambda_2)c_h +(\lambda_1\! - \!\lambda_2)d_h\cos (2\theta -\phi_h)  \nonumber \\  &\quad +  \lambda_1 \lambda_2 (h_1 h_3- h_2^2), \notag \\
& \stackrel{(d)}{=} 1 + Pc_h +(2 \lambda_1 - P)d_h\cos (2\theta -\phi_h)  \nonumber \\  &\quad +  \lambda_1 (P -\lambda_1) (h_1 h_3- h_2^2) \notag  \\
& \stackrel{(e)}{=} \alpha_h + \beta_h \lambda_1 -  \delta_h \lambda_1^2,
\end{align}
in which  $(a)$ is due to \eqref {eq:s1}, $(b)$  can be verified using \eqref{eq:ws1}-\eqref{eq:ws4},
 $(c)$ is due to \eqref{eq:ws1simple} and \eqref{eq:ws3simple},  $(d)$ is due to the fact that
  $\lambda_1 + \lambda_2 = P$ is optimal when $W > 1$, which follows from Lemma~\ref{lem:lsump}, and
  $(e)$ is obtained by defining
  \begin{subequations}
\begin{align} \label{eq:abh}
\alpha_h &= 1 + Pc_h - Pd_h\cos (2\theta -\phi_h),  \\
\beta_h & = 2 d_h\cos (2\theta -\phi_h) + P\delta_h, \\
\delta_h & = h_1 h_3- h_2^2.
\end{align}
\end{subequations}
In a  similar way, we can show that
\begin{align} \label{eq:Wg}
W_g &= \det(\mathbf{I}_{n_t} \!\!+ \mathbf{G}^t \mathbf{G} \mathbf{Q})= \alpha_g + \beta_g \lambda_1 -  \delta_g \lambda_1^2,
\end{align}
where
\begin{subequations}
\begin{align} \label{eq:abg}
\alpha_g &= 1 + Pc_g - Pd_g\cos (2\theta -\phi_g),  \\
\beta_g & = 2 d_g\cos (2\theta -\phi_g) + P\delta_g, \\
\delta_g & = g_1 g_3- g_2^2,
\end{align}
\end{subequations}
and $c_g, d_g$, and $ \phi_g$ are defined for $\mathbf{G}$ similarly to those of $\mathbf{H}$.
   Hence, we can write
\begin{align} \label{eq:WMIMOME2}
W &= \frac{W_h}{W_g} = \frac{ \alpha_h + \beta_h \lambda_1 -  \delta_h \lambda_1^2}{  \alpha_g + \beta_g \lambda_1 -  \delta_g \lambda_1^2}.
\end{align}
Next,  it can be checked that
\begin{align}\label{eq:parWMIMOME2}
\frac{\partial W}{\partial \lambda_1} = \frac{\bar c + \bar b \lambda_1  + \bar a \lambda_1^2}{(  \alpha_g + \beta_g \lambda_1 -  \delta_g \lambda_1^2)^2},
\end{align} in which
\begin{subequations}
\begin{align}
\bar a &= \delta_g  \beta_h  - \delta_h  \beta_g, \\
\bar b &=  2\delta_g \alpha_h - 2\delta_h \alpha_g, \\
\bar c &= \beta_h\alpha_g - \beta_g\alpha_h.
\end{align}
\end{subequations}
Let  $\Delta = \bar b^2 -4 \bar a \bar c$,  and suppose that $\Delta > 0$.\footnote{ When $\Delta \le 0$, $W$  is  strictly decreasing
or increasing    with $\lambda_1$, and    $\lambda_1=0$ or $\lambda_1 =P$  are the only critical points.} Then
\begin{subequations}\label{eq:parroots}
\begin{align}
\lambda_{1,1}^* = (- \bar b + \sqrt{\Delta})/2 \bar a, \label{eq:parroot1}\\ 
\lambda_{1,2}^* = (- \bar b - \sqrt{\Delta})/2\bar a,  \label{eq:parroot2}
\end{align}
\end{subequations}
are the roots of \eqref{eq:parWMIMOME2}. Next, it is easy to show that, for
$\lambda_{1,i}^*, \; i\in\{1,2\}$, in  \eqref{eq:parroot1} and \eqref{eq:parroot2} we have
\begin{align}\label{eq:par2WMIMOME2}
\frac{\partial^2 W}{\partial \lambda_1^2}(\lambda_{1,i}^*) &= \frac{\bar b  + 2 \bar a \lambda_{1,i}^*}{( \alpha_g + \beta_g \lambda_1 -  \delta_g \lambda_1^2)^2}
 = \begin{cases}
   + \frac{\sqrt{\Delta}}{W_g^2}, \qquad   i=1\\
  - \frac{\sqrt{\Delta}}{W_g^2}, \qquad   i=2\\
\end{cases}\!\!\!\!\!\!.
\end{align}
That is, the second derivative is positive at $\lambda_{1,1}^*$  and negative at $\lambda_{1,2}^*$. Thus, the former
corresponds to a minimum  of $W$ and the latter corresponds to a maximum  of that quantity.
Therefore, the following cases appear:
 \subsubsection{Case I ($\Delta \le 0)$} This case  results in a strictly decreasing  or increasing $W$  in $\lambda_1$. Then,
   $\lambda_1=0$ or $\lambda_1 =P$  is optimal, depending on the sign of $a$.
   The optimum value of $\lambda_1$ can be inserted
  into \eqref{eq:num} and \eqref{eq:den} to find the  optimal $\theta$.
  The optimal $\lambda_2$ is obtained from $\lambda_1 + \lambda_2 =P$.

%

\subsubsection{Case II ($\Delta > 0)$}
In this case, the maximum of $W$ is achieved by
  $\lambda_1 = 0$,  $\lambda_1= P$, or $\lambda_1= \lambda_{1,2}^*$, provided that $ 0\le \lambda_{1,2}^* \le P$.
  The optimal $\lambda_2$ is obtained from $\lambda_1 + \lambda_2 =P$.
Hence, when $W >1$, $(\lambda^*_1 ,\lambda^*_2)$ is one of the following pairs:
$(0, P)$,  $(P, 0)$, or $(\lambda_{1,2}^*,  P-\lambda_{1,2}^*)$. But,
in light of Lemma~\ref{lem:sym}, it can be seen that
 $(0, P)$ and   $(P, 0)$ result in the same optimum $W $ and thus one of them can be omitted.

To summarize, considering all cases for  $W \leq 1$ and  $W >1$,
it is enough to check
\begin{subequations}\label{eq:lambda1s}
\begin{align}
(\lambda^*_1 ,\lambda^*_2) &= (0, 0),  \label{eq:lambda1s1} \\
(\lambda^*_1 ,\lambda^*_2) &= (0, P),  \label{eq:lambda1s2} \\
(\lambda^*_1 ,\lambda^*_2) &= (\lambda_{1,2}^*,  P-\lambda_{1,2}^*), \label{eq:lambda1s4}
\end{align}
\end{subequations}
in order to obtain the maximum of $W$.
 We should highlight that \eqref{eq:lambda1s4} will be a choice only if $\lambda_{1,2}^*$,
defined in  \eqref{eq:parroot2}, is a real number between $ 0 $ and $ P$.
As a result, we have
\begin{thm}\label{thm:mimome2}
The optimal $\lambda_1$ and $\lambda_2$  in Theorem~\ref{thm:mimome} is confined to one of the following cases:
\begin{align}\label{eq:lOPTMIMOME}
(\lambda_1, \lambda_2)=
\begin{cases}
   ( 0, 0), \qquad \;  \\ 
   ( 0, P),  \qquad \\ 
   (\lambda^*,  P-\lambda^*) , \qquad \\ 
\end{cases},
\end{align}
in which $\lambda^* \triangleq \lambda_{1,2}^*$ is defined in  \eqref{eq:parroot2}, and  $\theta$ is given in \eqref{eq:thetaOPT}.
\end{thm}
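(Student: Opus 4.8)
The plan is to carry out a one-dimensional calculus argument on the reduced problem. First I would dispose of the easy regime: whenever the maximum of $W$ (over $\theta$, for the best $\lambda_1,\lambda_2$) does not exceed $1$, the secrecy rate $\frac{1}{2}\log W$ is nonpositive, so the trivial feasible point $(\lambda_1,\lambda_2)=(0,0)$, which yields rate zero, is optimal; this accounts for the first case in \eqref{eq:lOPTMIMOME}. For the remaining analysis I would assume $W>1$ and invoke Lemma~\ref{lem:lsump} to replace the constraint $\lambda_1+\lambda_2\le P$ by $\lambda_1+\lambda_2=P$, since using no power gives rate $0$, which is strictly suboptimal here. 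This turns the search into a single-variable problem in $\lambda_1\in[0,P]$ with $\lambda_2=P-\lambda_1$.

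Next I would substitute $\lambda_2=P-\lambda_1$ into $W_h=\det(\mathbf{I}_{n_t}+\mathbf{H}^t\mathbf{H}\mathbf{Q})$. Using the expansion in \eqref{eq:s1} together with the simplified entries \eqref{eq:wssimple} and the identity $w_1w_3-w_2^2=h_1h_3-h_2^2$, one finds that $W_h$ collapses to a quadratic $\alpha_h+\beta_h\lambda_1-\delta_h\lambda_1^2$ in $\lambda_1$, with coefficients as in \eqref{eq:abh}; the same computation applied to $\mathbf{G}$ gives $W_g=\alpha_g+\beta_g\lambda_1-\delta_g\lambda_1^2$. Hence $W=W_h/W_g$ is a ratio of quadratics, and differentiating yields $\partial W/\partial\lambda_1=(\bar a\lambda_1^2+\bar b\lambda_1+\bar c)/W_g^2$ with $\bar a,\bar b,\bar c$ as displayed. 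This bookkeeping is the bulk of the work; everything after it is routine.

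Finally I would classify the interior stationary points by the discriminant $\Delta=\bar b^2-4\bar a\bar c$. If $\Delta\le 0$ (or $\bar a=0$ with no interior root), $\partial W/\partial\lambda_1$ does not change sign on $[0,P]$, so $W$ is monotone there and its maximum is attained at an endpoint $\lambda_1\in\{0,P\}$. If $\Delta>0$, there are two stationary points $\lambda_{1,1}^*,\lambda_{1,2}^*$ given by \eqref{eq:parroots}; evaluating the second derivative at each gives $\pm\sqrt{\Delta}/W_g^2$ as in \eqref{eq:par2WMIMOME2}, so $\lambda_{1,1}^*$ is a local minimum and $\lambda_{1,2}^*$ a local maximum. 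Since a continuous function on the closed interval $[0,P]$ attains its maximum either at an endpoint or at an interior local maximum, the maximizer lies in $\{0,P,\lambda_{1,2}^*\}$, the last only when $0\le\lambda_{1,2}^*\le P$. Combining these candidates with the $W\le 1$ case $(0,0)$ and using Lemma~\ref{lem:sym} to identify $(0,P)$ with $(P,0)$ leaves exactly the three options in \eqref{eq:lOPTMIMOME}, with the corresponding optimal $\theta$ read off from \eqref{eq:thetaOPT}. I expect the only delicate point to be verifying the second-derivative signs in \eqref{eq:par2WMIMOME2} — i.e., confirming that the two interior stationary points form a minimum/maximum pair in the stated order — and checking that $W_g$ stays positive over the feasible interval so that the ratio analysis is legitimate.
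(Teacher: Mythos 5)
Your proposal is correct and follows essentially the same route as the paper's own derivation in Section~\ref{sec:analy}: dispose of the $W\le 1$ case with $(0,0)$, invoke Lemma~\ref{lem:lsump} to fix $\lambda_1+\lambda_2=P$, reduce $W$ to a ratio of quadratics in $\lambda_1$, classify the stationary points via the discriminant and the second-derivative signs in \eqref{eq:par2WMIMOME2}, and use Lemma~\ref{lem:sym} to merge $(0,P)$ with $(P,0)$. The only point you flag as delicate, positivity of $W_g$, is immediate since $W_g=\det(\mathbf{I}_{n_t}+\mathbf{G}^t\mathbf{G}\mathbf{Q})\ge 1$ for positive semidefinite $\mathbf{Q}$.
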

\begin{rem}
As can be  traced  from   \eqref{eq:parroot2},  in general, the optimal $\lambda_1$ is a function of $\theta$.
On the other hand, the optimal $\theta$, given in \eqref{eq:thetaOPT},  is a function of $\lambda_1$ (and $\lambda_2$).
Thus, the triplet $(\lambda_1, \lambda_2, \theta)$ can be found   for any possible maximizing argument in \eqref{eq:lOPTMIMOME}.
Then, by evaluating $W$ for these points we can determine which one is the optimal (capacity-achieving) solution.
For the first two cases in \eqref{eq:lOPTMIMOME} the solution is obtained analytically. However, the equation resulting from
combining  the third case in \eqref{eq:lOPTMIMOME} and \eqref{eq:thetaOPT} is rather cumbersome and thus we solve it numerically.
\end{rem}

\section{Special Cases and Possible Extensions}
\label{sec:ext}
In this section, we briefly consider some special cases of the proposed precoding as well as possible
extensions of this work.

\subsection{Beamforming for MISO and MIMO Channels}
\label{sec:mimo}
The optimal beamforming  provided in the previous section
achieves the capacity of  MISO and MIMO channels without an eavesdropper ($\mathbf{G} =  \mathbf{0}$),
as shown below.

\subsubsection{Capacity of MISO Channels}
We know that the capacity of a MISO channel is given by \cite{tse2005fundamentals}
\begin{align} \label{eq:capMISO}
\mathcal{C}_\mathrm{MISO} =   \frac{1}{2}   \log (1 + \Vert \mathbf{h} \Vert ^2 P ),
\end{align}
where $\mathbf{h}$ is the channel vector.
On the other hand, using  \eqref{eq:s1}, it is straightforward to check that the above rate
  is achieved by letting $\lambda_1 = P$, $\lambda_2 = 0$,  and $\theta = \frac{\pi}{2} + \alpha$, where $\tan \alpha\triangleq \frac{\sqrt{h_3}}{\sqrt{h_1}}$.

\subsubsection{Capacity of MIMO Channels}
\label{sec:MIMO}
It can be also checked that the proposed beamforming and power allocation is equal
to   SVD-based beamforming with water-filling for
$\theta = \frac{1}{2} \tan^{-1} \frac{b_h}{a_h} $
and
\begin{subequations}\label{eq:L1L2MIMO}
\begin{align}
\lambda_1 =  \min\Big\{\frac{P}{2}  + \frac{c_h}{\delta_h}, P \Big\}, \\
\lambda_2 =  \max\Big\{\frac{P}{2}  - \frac{c_h}{\delta_h}, 0 \Big\},
\end{align}
\end{subequations}
where $a_h = \frac{h_1 - h_3}{2}$, $b_h = h_2$,  $c_h = \sqrt{a_h^2+b_h^2}$, and $\delta_h = h_1h_1-h_2^2$.

\subsection{Extension to $n_t>2$}
The key idea in this paper is to use the fact that any  orthogonal matrix $\mathbf{V}$ is parametrized
by a single parameter $\theta$, as shown in  \eqref{eq:V}. Considering this, in \eqref{eq:cap}, we rewrite the capacity
expression in a way that  for any  $n_r$  and  $n_e$  (with $n_t=2$) the terms $\mathbf{H}^t\mathbf{H}$
and $\mathbf{G}^t\mathbf{G}$ are $2 \times 2$ matrices. Hence,  the capacity expression
can be represented by  three parameters, two nonnegative powers ($\lambda_1$ and $\lambda_2$) and
one angle $\theta$.\footnote{Excluding the case  $\mathbf{H}^H\mathbf{H}-\mathbf{G}^H\mathbf{G} \preceq 0$  which results in the trivial solution $(\lambda_1,\lambda_2)=(0,0)$,
from Lemma~\ref{lem:lsump} we can see that $\lambda_1 + \lambda_2 = P$. This implies that the capacity region
can be  expressed just by two parameters, i.e.,  $\lambda_1$ and $\theta$.} Then, the
covariance matrix can be  optimized with elementary trigonometric equations, as shown in Section~\ref{sec:main}. In the case
of $n_t = 3$, the main difficulty is to  parametrize the $3 \times 3$  orthogonal matrix $\mathbf{V}$  with two parameters.
Even with this, it is not guaranteed to get a tractable optimization problem.
We have made some progress  towards this goal, but the resulting
optimization problem is rather cumbersome and needs further simplification.
This issue becomes more challenging as  $n_t$ increases.

\subsection{Construction of Practical Codes}

Although the capacity of the MIMO wiretap channel  is well-studied,
construction of practical codes is still a challenging issue for this channel.
Recently, it has been shown in \cite{khina2014ordinary} that a good wiretap code,
e.g., a scalar random-binning code  \cite{tyagi2014explicit}, is applicable to
the MIMO wiretap channel in conjunction with a linear encoder and a
successive interference cancellation  (SIC) decoder to achieve a rate
close to the MIMO wiretap capacity. However, this
approach  gives rise to several practical issues in terms
of implementation, such as dithering  in the SIC decoder.
Considering this, one direction for future work would be to find a more practical code
construction for MIMO wiretap channels based on our new design
of closed-form optimal beamforming and power allocation solutions.

\begin{figure*}[t]
\centering
\subfigure[$n_e =1$]{
\includegraphics[scale=.59]{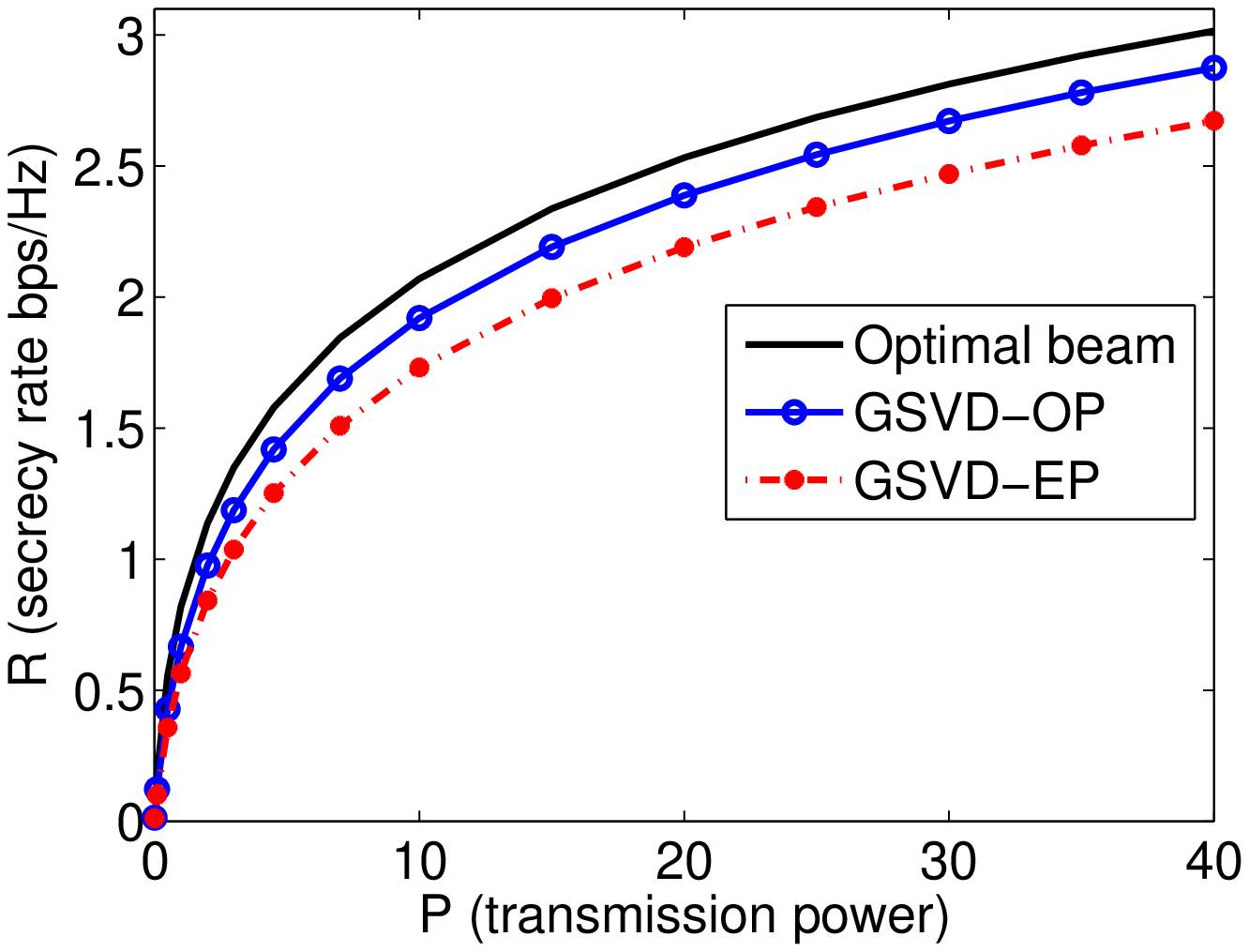}
\label{fig:MIMOME221}
}
\subfigure[$n_e =2$]{
\includegraphics[scale=.59]{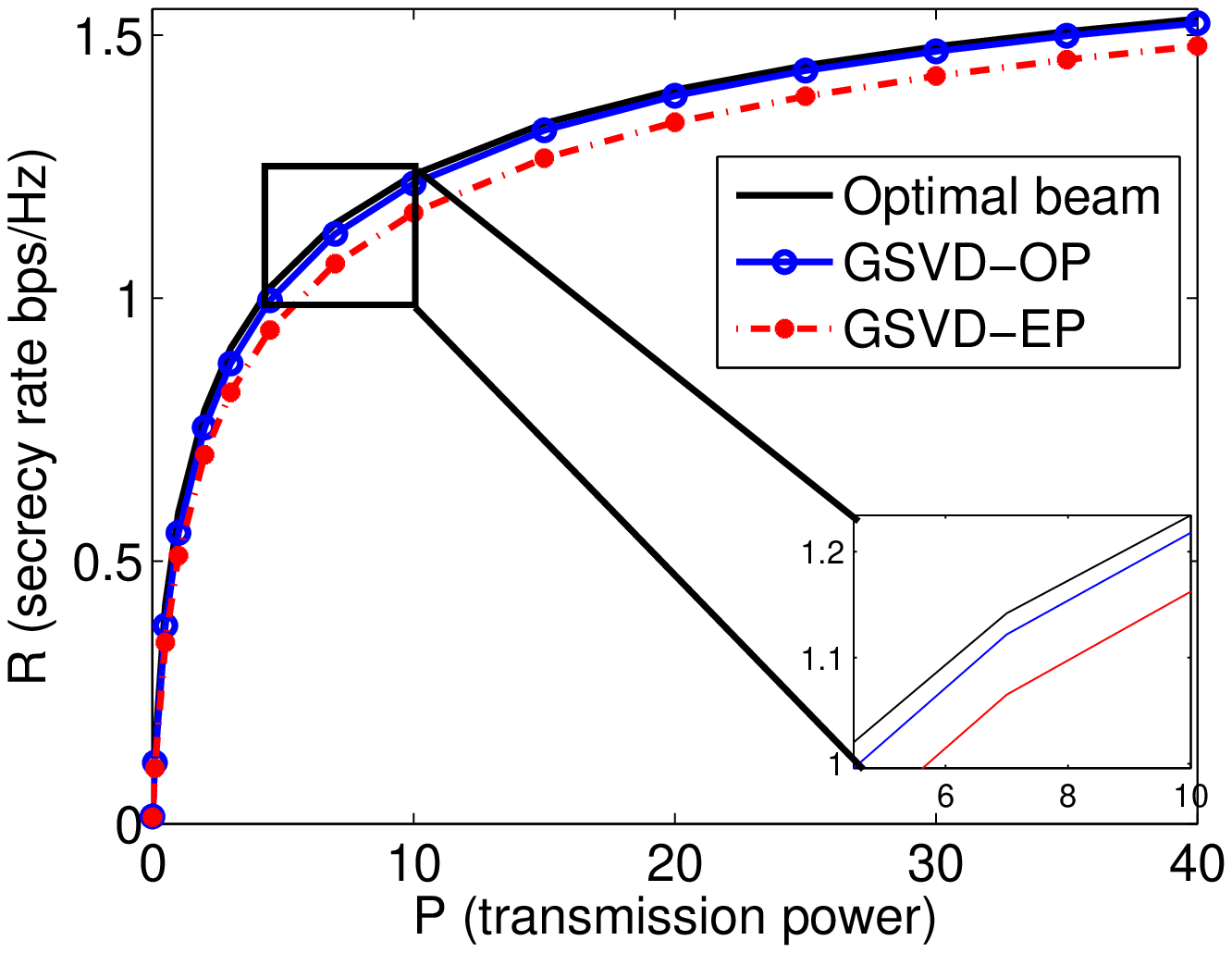}
\label{fig:MIMOME222}
}
\caption{Comparison of the secrecy capacity  of the MIMO Gaussian wiretap channel (achieved by the proposed beamforming method) and the  secrecy rate of GSVD-based beamforming
with equal and optimal power allocations for (a) $n_t=2$, $n_r =2$, $n_e =1$ (b) $n_t=2$, $n_r =2$, $n_e =2$.
}
\label{fig:MIMOME22}
\end{figure*}

\section{Numerical Results} \label{sec:sim}

In this section, we provide numerical examples to illustrate
 the  secrecy capacity of  Gaussian multi-antenna wiretap channels
using the proposed beamforming method.
We also compare our results with those of  GSVD-based beamforming with
equal power  (GSVD-EP) and optimal power (GSVD-OP) allocation proposed in
\cite{khisti2010secure} and \cite{fakoorian2012optimal}, respectively.
As  proved in Section~\ref{sec:main}, the proposed beamforming method is optimal and gives the capacity.
Numerical results are included here to show how much gain this optimal method brings  when compared with the
existing beamforming and power allocation methods.  It should be highlighted that the rate achieved by
GSVD-OP  is equal to or better than that of  GSVD-EP,
for any $\mathbf{H}$ and $\mathbf{G}$.

All simulation results are for  1000 independent realizations
of the channel matrices $\mathbf{H}$ and $\mathbf{G}$.
The entries of these matrices are generated  by i.i.d. $\mathcal{N}(0,1)$.
To get the capacity, we use the optimal power allocation of  Theorem~\ref{thm:mimome2}.
  We plot the secrecy rate versus total average power.

We first consider  the case with $n_t=2$, $n_r =2$, and $n_e =1$, where the eavesdropper has only one antenna.
As can be seen from Fig.~\ref{fig:MIMOME221}, the capacity-achieving beamforming performs significantly better than
both GSVD-based beamformings. By doubling  the eavesdropper's number of antennas in Fig.~\ref{fig:MIMOME222}, the secrecy capacity nearly  halves.
Moreover,
the rate achieved by the GSVD-OP becomes very close to that of optimal method.
However, as can be seen in Fig.~\ref{fig:MIMOME222}, there is still a small gap between the two methods particularly when $P$ is small.

We next consider the MISO wiretap channel in Fig.~\ref{fig:MIMOME212}. It can
be seen that there is a visible gap between the proposed beamforming and GSVD-based beamforming.
Note that GSVD-EP and GSVD-OP have exactly the same performance for  MISO wiretap channels.
This is because there is only one beam and all power is allocated to  that.
A general trend was seen  both for MISO and MIMO wiretap channels is that as SNR increases the
performance of GSVD-EP, and thus GSVD-OP,
get closer to that of  the optimal beamforming scheme derived in this paper. This is not surprising knowing that GSVD-EP
is asymptotically optimal; i.e., it is capacity-achieving as $P \to \infty $ \cite{khisti2010secure}.

%

\begin{figure}[t]
\centering
\includegraphics[scale=.59]{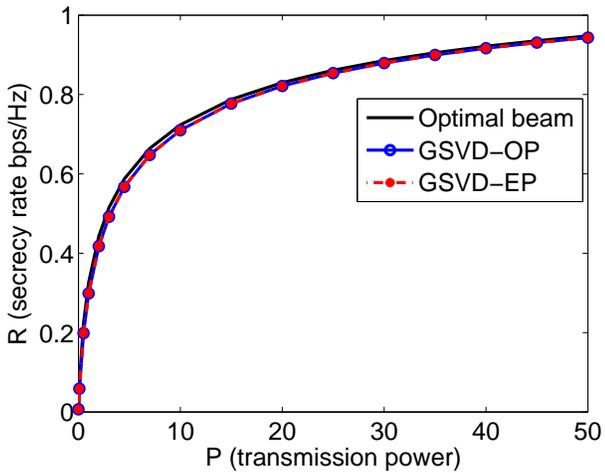}
\caption{The secrecy capacity  of the MISO wiretap channel  and the  secrecy rate of GSVD-based beamforming
 for $n_t=2$, $n_r =1$, and $n_e =2$.
}
\label{fig:MIMOME212}
\end{figure}

\begin{figure}[t]
\centering
\includegraphics[scale=.59]{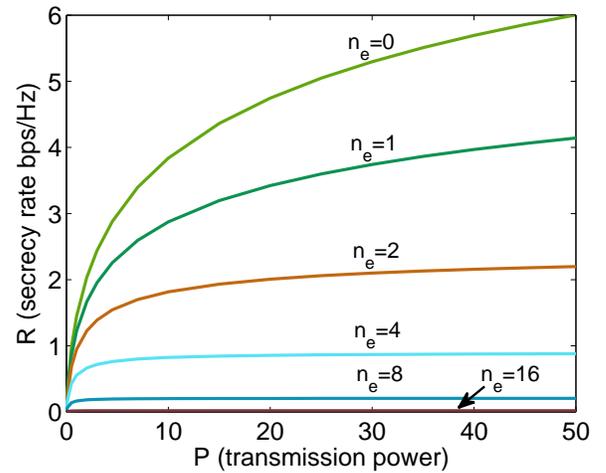}
\caption{The secrecy capacity  of the MIMO Gaussian wiretap channel for  various $n_e$, with $n_t=2$,  and $n_r=4$.
}
\label{fig:MIMOMEne}
\end{figure}

Figure~\ref{fig:MIMOMEne}  demonstrates the effect of increasing the number of antennas
at the eavesdropper. All curves in this figure are for $n_t=2$, $n_r=4$  but a different number of eavesdropper antennas, as depicted on each curve.
Note that $n_e = 0$ refers to the case where there is no eavesdropper; this curve is basically the capacity of MIMO
channel.\footnote{Recall from Section~\ref{sec:MIMO} that the proposed precoding for the MIMO Gaussian wiretap channel reduces to the
well-known SVD precoding of the MIMO channel  ($\mathbf{G} = \mathbf{0}$), and is capacity-achieving.}
Once the eavesdropper comes in play ($n_e \ge 1$),  the extent to which information can be secured over the air reduces.
The gap between each curve and the curve corresponding to  $n_e = 0$ is the unsecured information.
Unfortunately, for $n_e = 16$, and thus $n_e > 16$, no information can be secured via physical layer techniques.
This is because the eavesdropper can no longer be degraded by beamforming in this situation.

%

\section{Conclusion} \label{sec:con}
We have developed a  linear precoding scheme to achieve the capacity of
Gaussian multi-antenna wiretap channels in which the
legitimate receiver and eavesdropper have arbitrary numbers of antennas
but the transmitter has two antennas.
We have reformulated the  problem of determining the secrecy capacity into a tractable form and  solved this new  problem
to find the corresponding  optimal precoding and power allocation schemes.
Our investigation leads to a computable capacity with reasonably small complexity.
The gap between the secrecy rate achieved by the proposed precoding
and  GSVD-based beamforming can be remarkably high depending on the antenna configurations.
When the legitimate receiver or eavesdropper has  a single antenna, the optimal transmission scheme is unit-rank, i.e.,
beamforming is optimal.
  Further, in the absence of the eavesdropper, the proposed precoding
reduces to the capacity-achieving scheme of the MIMO/MISO channels. Hence,
 it can be used  for  these channels with/without an eavesdropper.

\section*{Appendix A:  Proof of Lemma~\ref{lem:lsump}}
\label{annexA}
\begin{proof}
Consider the optimization problem in \eqref{eq:cap0}. The secrecy capacity is zero if $\mathbf{H}^t\mathbf{H} - \mathbf{G}^t\mathbf{G} \preceq 0$ \cite{oggier2011secrecy}. In this case,  it is clear that $(\lambda_1,\lambda_2)=(0,0)$ is optimal.  Otherwise, the secrecy   capacity is strictly positive \cite{oggier2011secrecy} and  $\lambda_1+\lambda_2 = P$ is optimal. This completes the proof since the optimization problem in
   Lemma~\ref{lem:equi-opt} is a different representation of \eqref{eq:cap0}.
   \end{proof}

\section*{Appendix B}
\label{annexA}
To prove that the first (second) argument in \eqref{eq:thetaOPT1} corresponds to the
 minimum (maximum), it suffices to show that the second derivative of $W$ is positive for the first  argument and  negative for the second  one.
Let $\beta \triangleq \arcsin \frac{c}{\sqrt{a^2+b^2}}$ and recall that $\beta \in [-\frac{\pi}{2} \; \frac{\pi}{2}]$.
Then, form \eqref{eq:thetaOPT1}, the critical points are given by $\theta_1$ and  $\theta_2$ where
\begin{subequations}
\begin{align}
   \theta_1 &\triangleq   - \frac{1}{2} \phi - \frac{1}{2} \beta + k\pi,   \\
  \theta_2 &\triangleq  -  \frac{1}{2}\phi  + \frac{1}{2}\pi + \frac{1}{2}\beta  +k\pi.
\end{align}
\end{subequations}
Further, from  \eqref{eq:W}-\eqref{eq:rondsimp}, we know that
\begin{align}
\frac{\partial W}{\partial \theta} = \frac{\sin(2\theta + \phi) +\sin\beta  }{(a_2\sin2\theta + b_2 \cos2\theta + c_2)^2 }.
\end{align}
Then, at $\theta_2$  we have
\begin{align}
\frac{\partial^2 W}{\partial \theta^2}(\theta = \theta_2) &= \frac{2\cos(2\theta_2 + \phi)   }{(a_2\sin2\theta_2 + b_2 \cos2\theta_2 + c_2)^2 } \notag \\
& = \frac{-2\cos \beta    }{(a_2\sin2\theta_2 + b_2 \cos2\theta_2 + c_2)^2 }  \notag \\
& \le 0,
\end{align}
since $\beta \in [-\frac{\pi}{2} \; \frac{\pi}{2}]$.
Similarly, we can prove that $\frac{\partial^2 W}{\partial \theta^2}( \theta_1) \ge 0$.
Thus, $\theta_1$ and $\theta_2$  minimize and maximize  $W$, respectively.

\section*{Appendix C:  Proof of Lemma~\ref{lem:sym}}
To prove this, suppose $(\lambda_1 ,\lambda_2)$ maximizes  \eqref{eq:W}
 for  some $\theta^*$ given by \eqref{eq:thetaOPT}. Then, from \eqref{eq:num} and \eqref{eq:den},
 it is easy to check that  $(\lambda_2 ,\lambda_1)$ results in the same
$W$ for $\theta = \theta^* +\pi/2$. Therefore, $(\lambda_2 ,\lambda_1)$ can achieve the same rate as $(\lambda_1 ,\lambda_2)$ does. 

\section*{Acknowledgement}
The authors would like to thank the anonymous reviewers
for their valuable comments and suggestions that have significantly improved the
quality of the paper.

\end{document}